\PassOptionsToPackage{usenames,dvipsnames}{xcolor}
\documentclass[11pt,letterpaper]{article}

\usepackage[T1]{fontenc}
\usepackage[latin9]{inputenc}
\usepackage[authoryear]{natbib}
\usepackage[caption=false]{subfig}
\usepackage{ae,aecompl,geometry,float,framed,rotfloat,amsthm,soul,amsmath,
amssymb,graphicx,todonotes,xcolor,setspace,enumitem,sectsty,
booktabs,rotating,microtype,soul,units,bbm,mathtools,	palatino,bbm}

\geometry{verbose,tmargin=1in,bmargin=1in,lmargin=1in,rmargin=1in,includefoot}

\usepackage{hyperref}
  \hypersetup{colorlinks=true,citecolor=Bittersweet,linkcolor=Bittersweet,urlcolor=Bittersweet}
\setlength{\parskip}{1ex plus 0.5ex minus 0.2ex}
\makeatletter

\newcommand*\diff{\mathop{}\!\mathrm{d}}

\theoremstyle{plain}

  \theoremstyle{plain}
  
  \theoremstyle{definition}
  \newtheorem{defn}{\protect\definitionname}
  \theoremstyle{plain}
  \newtheorem{cor}{\protect\corollaryname}
  \theoremstyle{plain}
  \newtheorem{prop}{\protect\propositionname}
  \theoremstyle{plain}
  \newtheorem*{cor*}{\protect\corollaryname}
  \theoremstyle{plain}
  
\newcounter{asscount}
\setcounter{asscount}{0}

\newtheoremstyle{assumption}
  {0.2cm}{0cm}
  {\rmfamily}
  {0cm}
  {\bfseries}{ }
  {0cm}
  {\thmname{#1}\thmnumber{ #2}:\thmnote{ #3}}
\theoremstyle{assumption}

\newtheoremstyle{prediction}
  {0.2cm}{0cm}
  {\rmfamily}
  {0cm}
  {\bfseries}{ }
  {0cm}
  {\thmname{#1}\thmnumber{ #2}:\thmnote{ #3}}
\theoremstyle{prediction}

\@ifundefined{showcaptionsetup}{}{%
 \PassOptionsToPackage{caption=false}{subfig}}
\usepackage{subfig}
\makeatother
  \providecommand{\corollaryname}{Corollary}
  \providecommand{\definitionname}{Definition}
  \providecommand{\lemmaname}{Lemma}
  \providecommand{\remarkname}{Remark}
  \providecommand{\propositionname}{Proposition}
\providecommand{\theoremname}{Theorem}

\newcommand{\minus}{\scalebox{0.6}{$-$}}
\newcommand{\plus}{\scalebox{0.6}{$+$}}

\newcommand{\ask}{\mathsf{ask}}
\newcommand{\bid}{\mathsf{bid}}
\newcommand{\na}{\text{news}}

\newcommand{\hft}{\text{HFT}}
\newcommand{\qs}{\text{quote sniped}}
\newcommand{\pr}{\mathsf{Pr}}
\newcommand{\E}{\mathbb{E}}
\newcommand{\cs}{\mathcal{C}}

\usepackage{titling}
\newcommand{\lm}{\lambda_{M}}
\newcommand{\lb}{\lambda_{B}}
\newcommand{\lpcs}{\lambda_\text{PC}^\star}
\newcommand{\lods}{\lambda_\text{OD}^\star}
\newcommand{\pc}{\text{PC}}
\newcommand{\od}{\text{OD}}

\newcommand{\titlepaper}{Liquid speed: \\ On-demand fast trading at distributed exchanges}

\onehalfspacing  
\title{\textbf {\huge \titlepaper}}

\newcommand{\mazabstract}{\noindent Exchanges acquire excess processing capacity to accommodate trading activity surges associated with zero-sum high-frequency trader (HFT) ``duels.''  The idle capacity's opportunity cost is an externality of low-latency trading. We build a model of decentralized exchanges (DEX) with flexible capacity. On DEX, HFTs acquire speed in real-time from peer-to-peer networks. The price of speed surges during activity bursts, as HFTs simultaneously race to market. Relative to centralized exchanges, HFTs acquire more speed on DEX, but for shorter timespans. Low-latency ``sprints'' speed up price discovery without harming liquidity. Overall, speed rents decrease and fewer resources are locked-in to support zero-sum HFT trades.
\bigskip{}
}



\author{\Large Michael Brolley\thanks{Michael Brolley is affiliated with Wilfrid Laurier University, Lazaridis School of Business. E-mail: \href{mbrolley@wlu.ca}{mbrolley@wlu.ca}.} \\ Wilfrid Laurier University  \and \Large Marius Zoican\thanks{Marius Zoican (corresponding author) is affiliated with University of Toronto Mississauga and Rotman School of Management. Marius can be contacted at \href{marius.zoican@rotman.utoronto.ca}{marius.zoican@rotman.utoronto.ca}. Address: Rotman School of Management, 105 St. George Street, Toronto, Ontario; Canada M5S 3E6. We have greatly benefited from discussion on this research with Carole Comerton-Forde and Nicolas Inostroza. Marius gratefully acknowledges the Connaught Fund for a New Researcher Award.
} \\ University of Toronto}


\begin{document}

\maketitle

\vspace{-10mm}
\begin{abstract}
\mazabstract

\noindent \textbf{Keywords}: high-frequency trading, FinTech, decentralized exchanges, market design

\noindent \textbf{JEL Codes}: G10, G14, G23

\thispagestyle{empty}

\newpage{}
\thispagestyle{empty}
\end{abstract}

\vfill{}

\vfill{}

\pagebreak{}

\vspace*{20mm}
\begin{center}
\huge \titlepaper
\par\end{center}{\Large \par}

\vspace{12mm}

\bigskip{}
\begin{abstract}
\mazabstract

\bigskip{}

\noindent \textbf{Keywords}: high-frequency trading, FinTech, decentralized exchanges, market design \\
\textbf{JEL Codes}: G10, G14, G23
\bigskip{}

\newpage{}
\setcounter{page}{1}
\end{abstract}

\newpage
\setcounter{page}{1}

\section{Introduction \label{sec:Introduction}}

Electronic markets are driven by technology. Unlike humans, computers make trading decisions with extremely low latency, approaching the speed of light. If algorithms react simultaneously to a trading opportunity, their orders arrive at the market at the same time in so-called ``micro-bursts." Indeed, using nanosecond-level message data from Nasdaq, \citet{Menkveld2018High-FrequencyMicroscope} documents that 20\% of trades cluster in sub-millisecond intervals. However, micro-bursts are associated with higher adverse selection costs for liquidity providers, consistent with high-frequencies ``races'' between algorithms reacting to the same trading signal.

\citet{Budish2015} document that the median duration of an arbitrage opportunity dropped from 97 milliseconds in 2005 to 7 milliseconds in 2015, whereas the dollar profit per arbitrage trade did not change during the same period. In markets with time priority, a trader needs to be at least as fast as its fastest competitor to capture such opportunities. However, since short-term ``duels'' between high-frequency traders (HFTs) are essentially zero-sum games, they may even harm liquidity \citep{Menkveld2017NeedLiquidity}. \citet{BiaisFoucaultMoinas2015} argue that the ``arms race'' for speed is a socially costly investment.

The low-latency arms race between HFTs also impacts the trading venues themselves. Higher demand for trading speed generates a need for better exchange infrastructure. Trading platforms have to be able to tackle surges in market activity, that is, ``micro-bursts,'' as well as normal market conditions. To avoid order delays, exchanges invest in excess processing capacity, faster computer chips, and bigger data buffers \citep{Yesalavich2010MicroburstsTraders}. Throughout 2007, at the start of the financial crisis, the New York Stock Exchange (NYSE) increased processing capacity twice: first in February, from 17 to 38 thousand messages a second, and again in August from 38 to 64 thousand messages per second. In 2011, spurred on by the emergence of high-frequency trading, the NYSE platform was able to process four times as much: 250,000 orders every second.\footnote{Sources for this paragraph: \href{https://www.wsj.com/articles/SB119249626261060148?mod=rsswn}{After Crash, NYSE Got the Message(s)}, Wall Street Journal, October 16, 2007; and \href{https://www.pcworld.com/article/238068/how_linux_mastered_wall_street.html}{How Linux Mastered Wall Street}, PC World, August 15, 2011.} More recently, Nasdaq reports in May 2019 that the Securities Information Processor (SIP), which links the U.S. trading venue data into a single feed, has a capacity of 5.6 million messages per second.\footnote{See: \href{https://www.nasdaq.com/article/time-is-relative-where-trade-speed-matters-and-where-it-doesnt-cm1156464}{Time is Relative: Where Trade Speed Matters, and Where It Doesn't}, Nasdaq, May 30, 2019.}

Exchanges have an incentive to cater to the low-latency arms race and invest in excess capacity as they can partly extract HFT rents. Trading venues have a local monopoly over their own infrastructure: for instance, you have to pay NYSE a fee for locating your computer next to the NYSE server. \citet{Budish2019WillInnovation} estimate that co-location fees amount to  \$874M-1024M in 2018, which is of the same order of magnitude as transaction fees charged by trading venues. Therefore, technology is an important source of revenue for exchanges. When trading off cost (i.e., building expensive computer capacity) against performance (message delays for traders), exchanges are biased towards the latter, since HFTs are willing to bear the cost.

Since processing power is a scarce resource, maintaining idle capacity is costly. \citet{Menkveld2018High-FrequencyMicroscope} partitions trading days into microseconds, and finds that multi-trade microseconds occur only in 10\% of the sample.  \href{https://www.marketdatapeaks.net/rates/usa/}{MarketDataPeaks} -- a website that aggregates message rates across all U.S.-based exchanges on a daily basis -- documents that the average throughput on a random day is of 3.21M messages per second, surging to 25.2M messages per second within micro-bursts. Therefore, we estimate that as much as 90\% of the high-throughput exchange infrastructure is idle for 90\% of the typical trading day.

In this paper, we study dynamic pricing of trading infrastructure and the role it can play toward reducing the social costs associated with the low-latency arms race. We build a model of high-frequency trading following \citet{Menkveld2017NeedLiquidity}, to which we introduce a real-time market for computer power (CPU) capacity. Currently, dynamic pricing is successfully implemented in other settings, for example in electricity markets, airline reservation systems, or ride-sharing platforms. Financial markets, however, are unique as the demand for excess capacity is chronically generated by a zero-sum game between fast traders, rather than by fundamental consumer demand shocks. Further, from a technological standpoint, computer processing power for trading needs to be re-priced much more frequently, ideally at sub-second intervals. 

Recent FinTech innovations allow for low-latency dynamic pricing of exchange infrastructure. In particular, Section \ref{sec:Background} describes the architecture of decentralized exchanges (DEX) currently functioning on the Ethereum blockchain. The limit order book data and the trade matching engine are distributed as computer code in a peer-to-peer network. Each participant in the network (``miner'') has a copy of the exchange and may ``rent out'' spare computer power to process incoming orders, for a fee. When a trading surge occurs, excess demand pushes up the price of computing power, and miners allocate more CPU resources to the exchange. In normal market times, miners can re-route idle computer power towards other, more productive goals.

We compare two market design choices \citep[as in, for example][]{Boulatov2013HiddenProviders}. First, we consider a \emph{centralized exchange} where HFTs invest in low-latency technology before trading starts. This setup closely corresponds to the prevailing paradigm in which traders buy co-location services from the exchange on a subscription basis: the investment is sunk at the time of trading. Therefore, at centralized exchanges the equilibrium speed investment and price of CPU power is determined ex ante, and does not depend on real-time market conditions. Second, we consider a \emph{decentralized exchange} where HFTs compete to acquire low-latency in real-time, that is, conditional on observing a profitable trading opportunity. This setup corresponds to cloud- or blockchain-based distributed exchanges as described in Section \ref{sec:Background}.

We find that decentralized exchanges with dynamic CPU pricing improve the allocation of idle resources. While a decentralized exchange does not eliminate high-frequency ``duels'' between market makers and speculators, it shortens the time during which resources are committed to support zero-sum HFT races. In centralized exchanges, HFTs rent expensive technology on a continuous basis, poised to act on short-term opportunities. Nevertheless, the rented CPU capacity is idle for a majority of the time, and excluded from alternative productive uses. By contrast, in decentralized exchanges, HFTs rent technology on-demand from peer-to-peer CPU providers. Profitable trading opportunities simultaneously generate ``micro-bursts'' in trading activity as well as surges in the price of processing power. 

Further, dynamic pricing of trading speed does not harm liquidity and may improve the speed of price discovery. As in \citet{Budish2015}, low-latency races in our model are a zero-sum game. Consequently the bid-ask spread depends only on relative speed of traders (i.e., the probability of winning the race) and not on the absolute latency. In a symmetric equilibrium, HFTs invest equal amounts in trading speed and have equal probabilities to win the race: therefore, the bid-ask spread is the same in both centralized and decentralized market. 

A decentralized market may improve price discovery. At centralized markets, high-frequency traders must commit to rent CPU power continuously, for example via co-location fees, even when there are no trading opportunities. Since decentralized markets couple CPU resources to order execution, HFTs can spend similar resources to acquire more CPU power for shorter periods of time. 
If trading at a centralized market resembles a marathon where runners need to pace themselves, trading at DEX mirrors a sprint where speed can surge higher over short intervals. Consequently, the expected time from news until a price update (either via a speculator trade or quote update from the market maker) is lower in decentralized markets. The result is, of course, conditional on the caveat that HFTs can reach the gateway of a cloud-based exchange with the same latency as for a centralized exchange (where they may colocate).

Finally, HFTs earn lower rents on decentralized than on centralized exchanges. Even if the aggregate CPU power used on decentralized exchanges is lower, high-frequency traders face higher prices during micro-bursts when speed competition intensifies, reducing their rents during speed races around news events. Therefore, decentralized exchanges transfer value from zero-sum HFT races to the suppliers of peer-to-peer exchange infrastructure.

\section{Related literature \label{sec:literature}}

Our paper contributes to an active discussion on the role of technology in financial market design. It relates to several branches of literature in market microstructure and financial innovation (FinTech).

\paragraph{High-frequency trading.} \citet{Foucault2019IsDangerous} provide a comprehensive review of the recent literature on fast trading. In recent years, an arms race emerged for ever-faster markets: \citet{Baldauf2018FastProvision} report an average exchange-to-trader latency as low as 31 microseconds for highly liquid New York Stock Exchange instruments. 

Closest to our paper, \citet{BiaisFoucaultMoinas2015} and \citet{Budish2015} explicitly model investment in low-latency infrastructure on centralized exchanges. They argue that the speed arms race generates excessive investment in low-latency technology, but does not necessarily improve market quality. In both studies, speed investment is a binary decision (i.e., a trader can be either ``fast'' or ``slow'') and is sunk before trading starts. Further, the cost of low-latency technology is fixed and corresponds, for instance, to the cost of colocation. In our paper, we model the market for trading infrastructure. We allow for real-time acquisition and pricing of low-latency technology, allowing for speed investment to be contingent on contemporaneous trading data. We argue that such a market design, made possible by peer-to-peer decentralized platforms, leads to a better allocation of limited computer resources and limits the wasteful aspect of the low-latency trading arms race.

Recent empirical evidence suggests that benefits from faster markets flattened out. For example, \citet{Ye2013} find that a drop in exchange latency on NASDAQ from microseconds to nanoseconds reduced market depth. \citet{Shkilko2019EveryCosts} find that when rain disrupts the microwave network connection between Chicago and New York, slowing down the market, liquidity actually improves. However, traders may \emph{individually} benefit from being fast. \citet{Baron2019RiskTrading} document that fast traders earn short-term profits on market orders, consistent with order sniping. \citet{Foucault2016ToxicArbitrage} find evidence that high-frequency trading generates toxic cross-market arbitrage in foreign exchange markets.

How does the exchange infrastructure impact market quality? Exchanges may use latency investment strategically: \citet{Pagnotta2018CompetingSpeed} show that trading platforms may invest in speed as a horizontal differentiation tool to relax competition. Such strategic considerations are limited in the case of decentralized exchanges, where the infrastructure is provided competitively by atomistic participants in a peer-to-peer network. \citet{Menkveld2017NeedLiquidity} argue that low-latency exchanges promote zero-sum ``duels'' between informed HFTs and can lead to lower liquidity. 

This paper proposes decentralized markets as a solution to the low-latency arms race. There are several alternative proposals available. \citet{Budish2015} argue that a discrete-time market with frequent batch auctions eliminates the advantage of being marginally faster than competitors and stimulates price competition between high-frequency traders. \citet{Kyle2017TowardExchange} propose, at the opposite end of the spectrum, a fully-continuous exchange where traders submit buy or sell trade rates over time, rather than quantities. Speed bumps, that is intentional delays to order execution, are a solution particularly favored by exchanges \citep[see][for a list of trading platforms that implemented speed bumps]{Baldauf2019High-frequencyPerformance}; however, their effectiveness depends on implementation details. \citet{Aoyagi2019SpeedBumps} argues that a random delay can incentivize investment in trading speed and worsen adverse selection. \citet{Brolley2019OrderDelays} caution that exchanges may choose the design of speed bumps for their trading platforms to maximize exchange profits, which may not coincide with the elimination of the high-frequency arms race. Existing proposals focus therefore either on removing traders' incentives to be fast, or limiting speed directly. We argue for a market-based solution, where speed is priced in real-time, and therefore more costly within trading micro-bursts. Decentralized markets do not eliminate speed races, but rather ``localize'' them only to instances when speed is required, thereby reducing the time that technology remains idle. 

Our model follows \citet{Budish2015} and \citet{Menkveld2017NeedLiquidity} in that there is no exogenous information asymmetry between high-frequency market-makers and speculators. Indeed, \citet{BrogaardHendershottRiordan2014} find that fast traders both consume and provide liquidity. Adverse selection is generated by asynchronous arrival at an exchange with time-priority rules, where trading messages are processed in the order in which they reach the exchange. This complements a large body of literature where adverse selection stems from asymmetric information \citep[for example,][]{Glosten1985, FoucaultRosu2013}.

\paragraph{Financial technology and innovation.}  The paper also relates to a growing body of research studying the impact of financial technology (FinTech) on trading and market structure.

A majority of Blockchain-driven trading protocols feature a decentralized infrastructure where transaction settlement is implemented by ``miners.'' \citet{Cong2019DecentralizedPools} argue that in proof-of-work blockchains, where only the fastest miner receives the settlement reward, there are strong incentives to form coalitions (or \emph{mining pools}) to share risk. The ensuing competition between ever larger mining pools inefficiently increases the energy consumption of proof-of-work-based blockchains. \citet{Basu2019TowardsCryptocurrencies} study the optimal design of miner fees on decentralized markets. They argue for a second-price auction system, which would reduce traders' incentives to bid strategically on transaction fees, and therefore lead to more predictable transaction costs. \citet{Biais2019TheTheorem} argue that information delays and software upgrades may lead to situations where miners ``fork'' a blockchain, leading to less reliable transaction records.  \citet{Easley2019} show that transaction fees are not necessarily Pareto-improving. They worry that, in the long run, waiting times and equilibrium fees could be high enough to discourage user participation. We complement these papers by focusing on distributed \emph{trading}: Miners in our model post orders to a P2P-maintained limit book rather than registering already matched trades.

A number of papers focus on the role of Blockchain in market design. \citet{KhapkoZoican2018} study the impact of distributed ledger on post-trade infrastructure, and find that real-time settlement can reduce liquidity as it increases inventory costs for intermediaries. \citet{Chiu2019Blockchain-BasedTrading} estimate that Blockchain-based settlement could lead to savings of 1-4 basis points per transaction in the U.S. corporate debt market. \citet{MalinovaPark2016} argue that Blockchain can improve trading transparency while limiting front-running risk, and therefore generate welfare gains.

\paragraph{Dynamic pricing.} Dynamic pricing, where prices reflect the time-varying demand and marginal costs, is a regular occurrence in other infrastructure exchanges, such as the market for electricity \citep{Joskow2012DynamicElectricity}. \citet{Cramer2016DisruptiveUber} use the example of the ride-sharing company Uber to show how a peer-to-peer platforms with dynamic pricing increase the efficiency of car usage, leading to lower costs for riders. \citet{Azevedo2016MatchingAge} discuss how advances in information technology can ameliorate the functioning of matching markets. In the same spirit, our paper argues that the benefits of a nimble, real-time, market for trading infrastructures can reduce costs associated with the low-latency arms race.

\section{Architecture of decentralized exchanges \label{sec:Background}}

As of 2019, several decentralized, exchanges (DEX) emerged in the realm of cryptosecurities, notably Binance DEX, IDEX, or Ethex. They are built on top of distributed application (dApp) platforms, such as Ethereum, and use tamper-proof and automatically-executed smart contracts to match trades \citep{Bhutoria2018DecentralizedDEX}. Following the definition of \citet{Cong2019BlockchainContracts}, smart contracts are digital contracts enforced through network consensus. On a fully decentralized exchange such as Ethex, the computer code required to match orders and to maintain the state of the order book at any given time runs on a peer-to-peer (P2P) network of computers, rather than on a centralized exchange server. 

A \emph{blockchain}, which is a decentralized distributed ledger technology where ``blocks'' of transactions are cryptographically ``chained'' together, is a representative example of such a peer-to-peer network.\footnote{We note, however, that P2P networks do not need to rely on a blockchain: BitTorrent, a file transfer software, is a peer-to-peer platform that does not use blockchain.} A distributed exchange can be implemented on a blockchain platform associated with a Turing-complete programming language, that is, a platform allowing for sophisticated state-contingent smart contracts.\footnote{A Turing-complete language can theoretically implement any algorithm or task achievable by a computer.} The Ethereum blockchain (paired with the ``Solidity'' computer language) is an example of such a platform; the Bitcoin blockchain, on the other hand, is not. All real-world examples of decentralized exchanges referenced in this Section are implemented on the Ethereum blockchain.

How does a fully decentralized exchange (DEX) work? A trader first needs to broadcast her intention to buy or to sell a given quantity for a certain price. To this end, she  accesses the DEX front-end through an application programming interface (API) or through a web-based platform. The trader voluntarily appends a \emph{gas} fee to the message to compensate P2P platform operators for running the exchange \citep[see, e.g.,][]{Easley2019}, and finally signs the order with her private key. Cryptographic signatures prevent miners from tampering with, or front-running, the order: In fact, \citet{Aune2017} propose that the orders themselves be encrypted.

Following the order broadcast, one or more peer-to-peer computers \citep[e.g., ``relayers'' in the language of][]{Warren20170x:Blockchain} observe the pending order, pick it up, and ``write it'' to the smart contract that encapsulates the matching engine. The first relayer to attach the order to the smart contract collects the gas fee. A higher gas fee is likely to attract more miners and therefore reduce the order processing delay. Finally, the distributed matching engine automatically performs one of two actions: either (partially) executes the new order against a resting order if the market is crossed, or adds the order to the limit book otherwise.

Figure \ref{fig:excharch} illustrates a fully decentralized exchange architecture, following \citet{Warren20170x:Blockchain} and \citet{AuroraLabs2017IDEXExchange}. There are three components to the latency of an order: First, there is a trader-specific network latency, that is the required time to broadcast an order. Second, relayer latency corresponds to the delay between order broadcast and the moment when a relayer injects the order into the smart contract. Finally, contract latency measures the time needed for network consensus and the smart contract code execution. The illustration closely mirrors Figure 1 in \citet[][p.~1194]{Menkveld2017NeedLiquidity} for a centralized exchange. We follow \citet{Menkveld2017NeedLiquidity} and focus on latency due to exchange infrastructure as opposed to network latency to reach the exchange front end.

\begin{figure}[H]
\caption{\label{fig:excharch} \textbf{Decentralized exchange architecture}}
\vspace{0.1in}
\begin{centering}
\includegraphics[width=\textwidth]{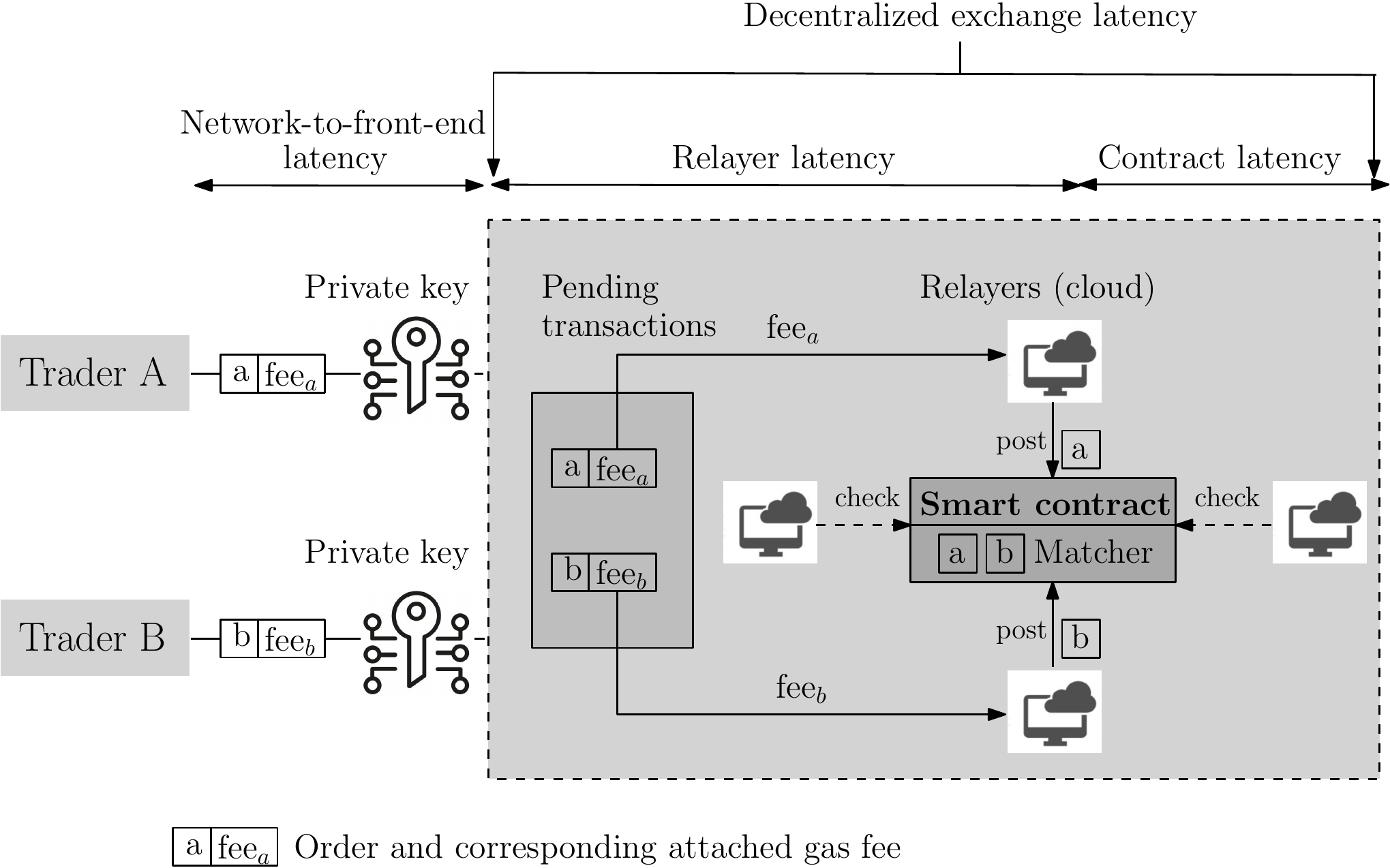}
\par\end{centering}

\end{figure}

Two design features of real-world decentralized exchanges are noteworthy. First, crytpo-asset trading platforms such as 0x, EtherDelta, or IDEX, use a hybrid ``semi-centralized'' structure featuring off-chain order relay with on-chain settlement. Limit orders are crypthographically signed and broadcast off of the blockchain with a empty counterparty field. Interested traders fill in their own crypto wallet address and inject the order into the smart contract on the blockchain. The hybrid process is faster since a trade settles on the blockchain only after counterparties are matched. Other exchanges, such as Ethex, are fully decentralized and store limit orders on the blockchain. Second, on-chain settlement is typical: in addition to the order matching smart contract, decentralized exchanges may use a second smart contract to modify traders' holdings of digital assets. In the context of cryptoassets, where exchange hacks are common, on-chain settlement augments security as it allows for direct transfers between traders' accounts \citep{AuroraLabs2017IDEXExchange}. However, a decentralized exchange implemented on a regulated, transparent platform, does not necessarily need to integrate trading and settlement.

\section{Model \label{sec:model}}

\paragraph{Asset.} A single risky asset is traded on a limit order book with price-time priority. At the start of the trading game ($t=0$), the asset has fundamental value $v$, which is common knowledge to all market participants.
Innovations to the fundamental value (``news'') arrive at a Poisson rate $\delta$. Conditional on news, the fundamental value is equally likely to be $v+\sigma$ (i.e., good news) or $v-\sigma$ (i.e., bad news), where $\sigma>0$ is the news size.

\paragraph{Trading environment.} A matching engine operates the limit order book, where it accepts limit orders and ``marketable'' limit orders submitted by market participants.  A limit order is defined as a price quote to either buy (a bid) or sell (an ask) a given amount of the asset. Any unexecuted limit orders are stored in the order book. A ``marketable'' limit order (either a limit buy order with a price higher than the lowest ask in the book or a limit sell order with a price below the highest bid) is immediately matched with the corresponding bid or ask price, resulting in a trade. We refer to such marketable limit orders simply as market orders.

\paragraph{Traders.}

There are two types of traders: $H=2$ high-frequency traders (HFTs) and a large number of liquidity investors (LI). Both trader types are risk-neutral. High-frequency traders submit both marketable and non-marketable limit orders: they compete to supply liquidity, as well as trade on fundamental value innovations. There is no fee for submitting orders to the exchange. HFTs possess a monitoring technology that allows them to perfectly and instantaneously observe changes to the security's fundamental value. Further, they can invest in technology (computer processors) allowing them to access the market faster. 

Liquidity investors receive private value shocks at a Poisson rate $\mu$. Conditional on a shock, the private value of a liquidity investor is either $\eta$ (translating to a buy intention) or $-\eta$ (sell intention), with equal probabilities, where $\eta>\sigma$. Liquidity investors are infinitely impatient and submit only market orders as soon as they are hit with a private value shock. Finally, liquidity investors do not have access to either the monitoring or market-access technologies.

\paragraph{Market Access Technology.} High frequency traders have access to a low-latency market access technology provided by suppliers of computer processing power, which we refer to as \emph{processors.}

There is a continuum of competitive processors indexed by $j$: By lending its resources to HFTs, processor $j$ incurs a linear opportunity cost of $K_j=\kappa\times j$ per unit of time. The parameter $\kappa$ measures the elasticity of computer power supply: a higher $\kappa$ translates to a higher marginal cost for technology.

The Poisson arrival rate of HFT messages is proportional to the amount of computer power she controls at a given point in time. In particular, if an HFT rents a mass $\lambda$ of processors, any submitted orders arrive at the matching engine as a Poisson process with intensity $\lambda$. 

The technology supply schedule corresponds to the marginal cost function of processors. Therefore, as Figure \ref{fig:procmarket} illustrates, if HFTs demand a processing rate $\lambda$, the clearing price for processing power is $k\lambda$, and consequently HFTs spend $k\lambda^2$ on low-latency technology.

\begin{figure}[H]
\caption{\label{fig:procmarket} \textbf{Processor market clearing}}
\vspace{0.1in}
\begin{centering}
\includegraphics[width=4in]{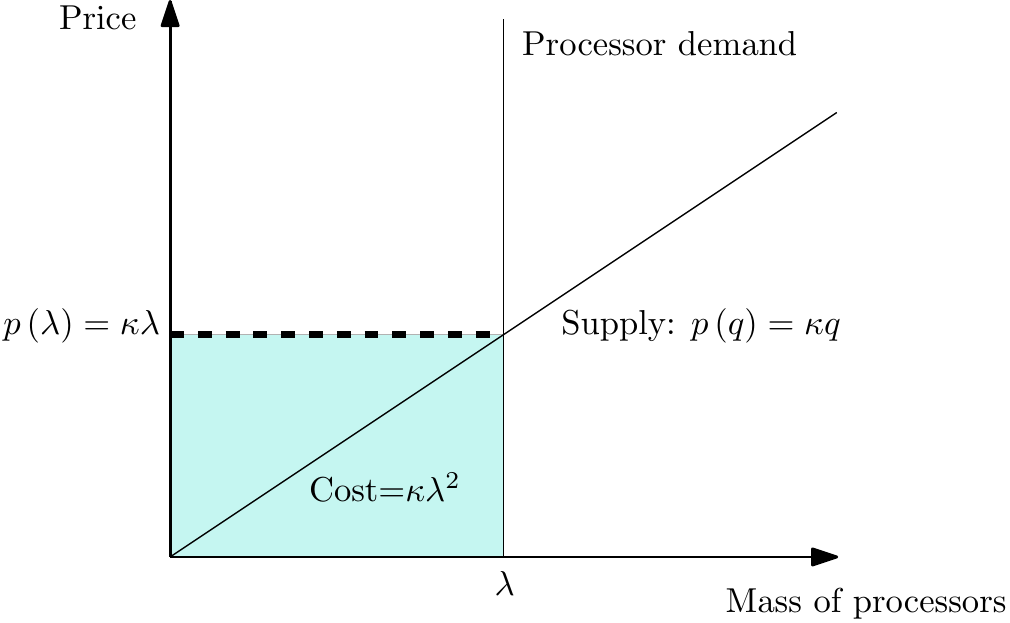}
\par\end{centering}

\end{figure}

The technology market clears at a uniform price, and quantities are allocated pro rata. In particular, if the two HFTs, $i$ and $-i$, demand order arrival rates $\lambda_i$ and $\lambda_{-i}$, the unique equilibrium technology price is $p\left(\lambda_i+\lambda_{-i}\right)=\kappa\left(\lambda_i+\lambda_{-i}\right)$. 
It follows that individual HFT investment in processing speed is 
\begin{align}\label{eq:cs}
    \cs_i&=\kappa \lambda_i\left(\lambda_i+\lambda_{-i}\right) \text{ and, respectively, }\nonumber \\
    \cs_{-i}&=\kappa \lambda_{-i}\left(\lambda_i+\lambda_{-i}\right).
\end{align}

\paragraph{Timing.} 
The timing of our setup closely follows that of \citet{Menkveld2017NeedLiquidity}. At $t=0$, the two high-frequency traders compete to provide quotes. Since any liquidity investor demands at most one unit, the HFTs submit limit orders for one unit of the asset. The highest-price buy quote (bid) and the lowest-price sell quote (ask) prevail and are posted to limit order book. We denote the HFT that succeeds in posting their quotes as the high-frequency market-maker (HFM). The other HFT acts as a quote sniper, which we refer to as the high-frequency ``bandit'' (HFB).

We define a ``trigger event'' to be the first arrival of either news or a liquidity investor. The trigger event time is random, with an expected value of $\E t_\text{trigger}=\frac{1}{\delta+\mu}$. The HFTs immediately react to the trigger event and send out orders to the exchange. With probability $\frac{\mu}{\mu+\delta}$, a liquidity investor consumes the appropriate quote that matches the direction of their trading intentions. Otherwise, with probability $\frac{\delta}{\mu+\delta}$, news arrives. The HFB rushes to submit a market order to the matching engine to snipe the stale quote, while the HFM rushes to cancel the stale quote. Upon the fill or cancellation of the stale quote, the game ends and market participants realize their payoffs. 

In the spirit of \citet{Boulatov2013HiddenProviders}, we compare two alternative exchange environments corresponding to different clearing times for the technology market. That is, we consider both
\begin{enumerate}
    \item[(a)] a centralized exchange with speed pre-commitment and,
    \item[(b)] a decentralized exchange with on-demand speed. 
\end{enumerate}
The timing of the trading game, in both cases, is illustrated below.

\begin{figure}[H]
\caption{\label{fig:timing} \textbf{Model timing}}
\vspace{0.1in}
\begin{centering}
\includegraphics[width=\textwidth]{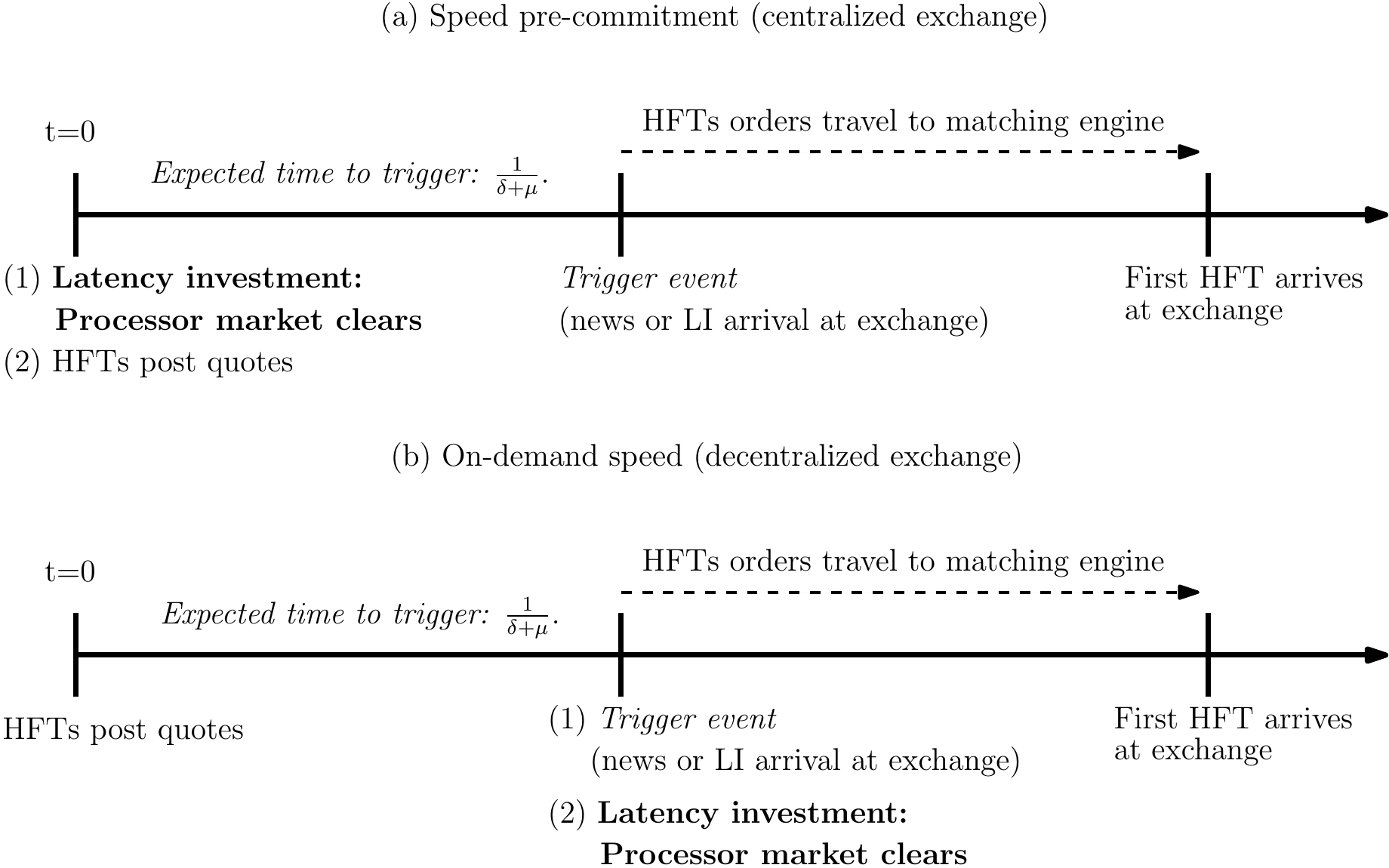}
\par\end{centering}

\end{figure}

In a \emph{speed pre-commitment} environment, the market for latency infrastructure clears at $t=0$, before trading starts. The setup echoes the prevailing market practice where centralized exchanges offer traders a menu of (usually monthly) server co-location and direct market access subscriptions. By contrast, in the \emph{on-demand speed} environment, the market for latency infrastructure clears in real-time. High-frequency traders acquire processing power ``as-needed,'' after observing the trigger event. The setup echoes the implementation Ethereum-based distributed exchanges such as EtherDelta. Traders can attach a higher gas fee to their order submissions, offering incentives for miners to prioritize their message over others. 



\section{Equilibrium\label{sec:eqm}}

We search for Nash Equilibria that are symmetric in processing speed investment ($\lambda_{i}=\lambda_{-i}$). 

\begin{defn}[Equilibrium]\label{defn1}
\begin{leftbar} \setlength{\parskip}{0ex}
An \emph{equilibrium} of the trading game consists of (i) a choice $\lambda_i$ of low-latency investment for each $\text{HFT}_i$, (ii) an HFT quoting strategy at $t=0$, that is a bid price at which each HFT is ready to buy the asset and an ask price at which each HFT is ready to sell the asset, (iii) HFT orders following the trigger event, conditional on whether the order book features his quotes at the end of $t=0$, and (iv) a market clearing price for computer processors. 
\end{leftbar}
\end{defn}

In Section \ref{sec:PC}, we study the equilibrium in a centralized exchange with speed pre-commitment; Section \ref{sec:OD} focuses on decentralized exchanges with on-demand speed acquisition.

\paragraph{Order book symmetry.} For simplicity of exposition, we focus throughout the paper on HFT ask quotes. This is without loss of generality, as buy and sell orders are symmetric as in, for example, \citet{Budish2015}.

To fix intuition, assume first good news arrived and the asset value is $v+\sigma$. A high-frequency bandit will attempt to buy the asset for $\ask$ and realize a profit of $v+\sigma-\ask$. 
Since HFTs engage in a zero-sum game, the corresponding market maker loss is $\ask-\sigma$. Alternatively, assume bad news and therefore the asset value is $v-\sigma$. In this case, the HFB attempts to sell the asset at the bid and make an identical profit of $v-\ask-\left(v+\sigma\right)=\sigma-\ask$. The two scenarios mirror each other: a successful snipe yields a profit of $\sigma-\ask$ to the HFB and a loss of $\ask-\sigma$ to the market-maker.
Similarly to this intuition, the public value of the security $v$ at $t=0$ cancels our in our analysis that follows in Sections \ref{sec:PC} and \ref{sec:OD}. 
Thus, for simplicity, we normalize $v=0$ in our subsequent analysis.
Finally, we use subscripts `M' and `B' denote variables and functions pertaining to HFM and HFB, respectively.

\subsection{Centralized exchange environment (pre-committed speed investment)\label{sec:PC}}

In this section, we assume that investment in processing speed by $\hft_{i}$ given by $\cs_{i}(\lambda)$ in equation \eqref{eq:cs} is made at the beginning of $t=0$, just prior to the start of the trading game where HFTs rush to post quotes.  
We view this pre-committment to processing speed--denoted \emph{PC} --as similar to the current centralized exchange paradigm: firms pay co-location subscription fees to access the exchange with reduced latency, where access persists for a length of time such that speed investment is a sunk cost with respect to any one trading day that begins at $t=0$.


We solve this game via backward induction. 
At the quoting race stage following speed investment at $t=0$, HFTs race to post their quotes to the matching engine to become the HFM.
Because HFTs $i$ and $-i$ are both rushing to post quotes but do not yet know their roles in the game, the expected payoff from participating as the HFM is written as,
\begin{align}
    \pi_{0}^\text{M} = & \text{ } \pr(\na)\times\pr(\qs)\times(\ask-\sigma)+(1-\pr(\na))\times\ask \nonumber
    \\ & -\E[\text{processor rental time} \mid \lambda_{i}]\times\cs_{i}(\lambda_{i}), \nonumber
    \\ = & \text{ } -\frac{\delta}{\delta+\mu} \frac{\lambda_{-i}}{\lambda_i+\lambda_{-i}} \left(\sigma-\ask\right)+\frac{\mu}{\delta+\mu} \ask-\E[\text{processor rental time} \mid \lambda_{i}]\times\cs_{i}(\lambda_{i}). \label{hfmt1pi}
\end{align}
where the term $\frac{\lambda_{-i}}{\lambda_i+\lambda_{-i}}$ reflects the probability that should $\hft_{i}$ become the HFM, $\hft_{i}$ loses the race to cancel the stale quote $\ask$ following the arrival of news, and $\E[\text{processor rental time} \mid \lambda_{i}]\times\cs_{i}(\lambda_{i})$ reflects the total expected processing expenditure (i.e., duration multiplied by speed intensity). With probability $\frac{\mu}{\delta+\mu}$, a liquidity investor arrives to the market before news, buys the asset, and the HFM receives the ask price.

If $\hft_{i}$ becomes the HFB, they will seek to snipe stale quotes in the event of news arrival. 
As such, their expected payoff is given by,
\begin{align}\label{hfbt1pi}
    \pi_{0}^\text{B} & = \pr(\na)\times\pr(\qs)\times(\sigma-\ask)-\E[\text{processor rental time} \mid \lambda_{i}]\times\cs_{i}(\lambda_{i}),\nonumber
    \\ & = \frac{\delta}{\delta+\mu} \frac{\lambda_{i}}{\lambda_{i}+\lambda_{-i}} \left(\sigma-\ask\right)-\E[\text{processor rental time} \mid \lambda_{i}]\times\cs_{i}(\lambda_{i}).
\end{align}
In equation \eqref{hfbt1pi}, since $\hft_{i}$ is the bandit, the snipe probability is $\frac{\lambda_{i}}{\lambda_{i}+\lambda_{-i}}$, that is, the probability that $\hft_{i}$ is the first to arrive at the market after news.

What is the equilibrium spread? To ensure both HFTs are indifferent to becoming an HFM or an HFB, HFTs will quote a price $\ask^\star$ such that the expected profit functions in \eqref{hfmt1pi} and \eqref{hfbt1pi} are equal. 
Intuitively, if $\hft_i$ instead posts $\ask^{\plus}>\ask^\star$, then $\hft_{-i}$ always becomes the market-maker since she quotes the tighter spread. Therefore, $\hft_i$ earns the HFB expected profit for a spread of $\ask^\star$ and is consequently indifferent between deviating or not.\footnote{The intuition here follows the proof of Proposition 1 from \citet{Menkveld2017NeedLiquidity}, p.~1217.} Similarly, if $\hft_i$ quotes a tighter spread, that is $\ask^{\minus}<\ask^\star$, she becomes the market maker for sure. However, the deviation is not profitable since
\begin{equation}
    \pi_{0}^\text{M}\left(\ask^{\minus}\right)<\alpha\pi_{0}^\text{M}\left(\ask^\star\right)+\left(1-\alpha\right)\pi_{0}^\text{B}\left(\ask^\star\right), \forall \alpha \in \left[0,1\right].
\end{equation}.

Setting (\ref{hfmt1pi}) and (\ref{hfbt1pi}) equal and solving for $\ask^\star$, we obtain,
\begin{align}
\nonumber -\frac{\delta}{\delta+\mu} \frac{\lambda_{-i}}{\lambda_i+\lambda_{-i}} \left(\sigma-\ask\right)+\frac{\mu}{\delta+\mu} \ask & = \frac{\delta}{\delta+\mu} \frac{\lambda_{i}}{\lambda_{i}+\lambda_{-i}} \left(\sigma-\ask\right)
\\ \nonumber \iff \frac{\mu}{\delta+\mu} \ask & = \frac{\delta}{\delta+\mu} \frac{\lambda_{i}+\lambda_{-i}}{\lambda_i+\lambda_{-i}} \left(\sigma-\ask\right)
\\ \iff \ask^\star & =\frac{\delta}{\delta+\mu}\sigma.\label{pc_ask}
\end{align}
Note that the cost of investment in speed from (\ref{hfmt1pi})-(\ref{hfbt1pi}) cancels out when we compare the two payoff functions at the quoting stage, as the investment is sunk once the quoting game begins at $t=0$.
In fact, equation (\ref{pc_ask}) is not only independent of the expected processing speed cost, but---as in \citet{Budish2015}---the quoted spread $\ask^\star$ is independent of speed levels altogether. The HFT ``duel'' is essentially a zero-sum game. Any investment in speed that reduces the probability of adverse selection to $\hft_{i}$ in their role as a market-maker also increases their ability to snipe a stale quote as a bandit at the same rate, $\frac{\lambda_{-i}}{(\lambda_i+\lambda_{-i})^2}$. 
Moreover, the quoting decision of $\hft_{i}$ is impacted similarly by the speed investment of $\hft_{-i}$. If $\hft_{-i}$ invests more in speed, then $\hft_{i}$'s payoff as either market maker or bandit decreases at the same rate, that is $\frac{\lambda_{i}}{(\lambda_i+\lambda_{-i})^2}$.    
Hence, the equilibrium quoted price does not depend on the initial investment in speed by both HFTs.

Prior to the quoting game, HFTs commit resources to rent market access technology from processors at the beginning of $t=0$.
Market access technology is rented until the game ends following a trigger event.
We denote the (random) trigger event time as $\tau_\text{trigger}=\min\left\{\tau_\na,\tau_\text{LI}\right\}$. 
If the trigger event is an LI arrival, then the game stops immediately following the trigger event. 
Conversely, if the trigger event is news arrival, the game continues until the event time $\tau_\text{HFT race}$ that the stale quote is either consumed or updated. 
Formally, we write the expected processor rental duration for an HFT as:
\begin{align}
    \E\left[\tau_\text{trigger}+\mathbbm{1}_{\tau_\text{news}\leq \tau_\text{LI}}\tau_\text{HFT race}  \mid \lambda_{i} \right] 
    &=\int_0^\infty \left[\int_0^y \left(x+\frac{1}{\lambda_{i}+\lambda_{-i}}\right) \delta e^{-\delta x} \diff x + y \int_y^\infty \delta e^{-\delta x} \diff x \right] \mu e^{-\mu y} \diff y, \nonumber \\ &= \frac{1}{\delta+\mu}+\frac{\delta}{\delta+\mu} \frac{1}{\lambda_{i}+\lambda_{-i}}.\label{procdur}
\end{align}
The expected duration of processor use simplifies to equation \eqref{procdur}, which sums the expected duration of two events: the expected time until the first event $\left(\frac{1}{\delta+\mu}\right)$, and the expected duration of the HFT race following the arrival of news $\left(\frac{1}{\lambda_{i}+\lambda_{-i}}\right)$, which occurs with probability $\frac{\delta}{\delta+\mu}$.

Given the equilibrium quote $\ask^\star$ from \eqref{pc_ask} and the expected processor rental duration from \eqref{procdur}, we write the $\hft_{i}$ payoff function at $t=0$ denoted $\pi_{0}^\text{i}$ as,
\begin{align}\label{hfteqpc}
    \pi_{0}^\text{i}(\lambda_{i}) & =\pr(i=M \mid \lambda_{i})\pi_{0}^{\text{M}}(\lambda_{i})+\pr(i=B \mid \lambda_{i})\pi_{0}^{\text{B}}(\lambda_{i})-\E[\text{processor rental time} \mid \lambda_{i}]\times\cs_{i}(\lambda_{i}) \nonumber
    \\ &=\frac{\lambda_i}{\lambda_{-i}+\lambda_{i}} \frac{\delta\mu\sigma}{(\delta +\mu )^2}-\left(\frac{1}{\delta+\mu}+\frac{\delta}{\delta+\mu}\frac{1}{\lambda_{i}+\lambda_{-i}}\right)\times\lambda_i \kappa\left(\lambda_i+\lambda_{-i}\right).
\end{align}
The simplification in (\ref{hfteqpc}) follows from the fact that $\pi_{0}^{\text{M}}(\ask^\star)=\pi_{0}^{\text{B}}(\ask^\star)$.
To solve for the optimal processing speed intensity $\lambda_{i}$, we take the first-order condition of (\ref{hfteqpc}) and solve for the fixed point under the assumption of symmetric investment intensities ($\lm=\lb$). Proposition \ref{prop:pc_eq} describes the equilibrium that obtains.
\begin{prop}[Symmetric Pre-Commitment Equilibrium]\label{prop:pc_eq}
Let $(\delta,\mu,\sigma,\kappa)\in(0,\infty)^4$.
There exists a unique Nash Equilibrium in the sense of Definition \ref{defn1} where $\ask^\star$ is as in Equation (\ref{pc_ask}), and  $\lambda_{i}=\lambda_{-i}=\lpcs\in(0,\infty)$ is the maximum of Equation (\ref{hfteqpc}).
Moreover, $\lpcs$ is given by:
\begin{equation}\label{leqpc}
    \lpcs=\frac{-\delta+\sqrt{\delta^2+\frac{3\delta\mu\sigma}{\kappa(\delta+\mu)}}}{6}.
\end{equation}
\end{prop}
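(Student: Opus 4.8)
The plan is backward induction, following the two-stage structure already set up in the text: first fix the speed profile and solve the quoting subgame, then substitute the resulting spread back and optimize over speed. For the quoting subgame, observe that for any symmetric profile $\lm=\lb$ the reduced payoffs $\pi_0^\text{M}$ and $\pi_0^\text{B}$ in (\ref{hfmt1pi})--(\ref{hfbt1pi}) depend on the posted quote only through $\ask$, and the processor-cost term is identical in both (the investment is sunk at the start of $t=0$). Hence quoting the $\ask^\star$ that equalizes (\ref{hfmt1pi}) and (\ref{hfbt1pi}) yields the payoff $\pi_0^\text{M}(\ask^\star)=\pi_0^\text{B}(\ask^\star)$ regardless of which role $\hft_i$ ends up in. I would then show no deviation improves on this: a strictly wider quote $\ask^{\plus}>\ask^\star$ cedes the market-maker role for sure and delivers exactly $\pi_0^\text{B}(\ask^\star)$; a strictly tighter quote $\ask^{\minus}<\ask^\star$ secures the market-maker role but, since $\pi_0^\text{M}$ is strictly increasing in $\ask$ and $\pi_0^\text{M}(\ask^\star)=\pi_0^\text{B}(\ask^\star)$, it gives $\pi_0^\text{M}(\ask^{\minus})<\pi_0^\text{M}(\ask^\star)=\alpha\pi_0^\text{M}(\ask^\star)+(1-\alpha)\pi_0^\text{B}(\ask^\star)$ for all $\alpha$, covering any randomization over roles. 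This is the argument of \citet{Menkveld2017NeedLiquidity}, Proposition 1, adapted; it uses $\eta>\sigma$ (so an LI always consumes the prevailing quote) and the cancellation of $v$. Solving the indifference equation gives $\ask^\star=\frac{\delta}{\delta+\mu}\sigma$ as in (\ref{pc_ask}), which is independent of $\lm,\lb$, and substituting into the ex ante payoff produces the reduced form (\ref{hfteqpc}).

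Next I would analyze the speed stage. Collapsing the rental-duration factor against $\cs_i$ in (\ref{hfteqpc}) gives the clean expression $\pi_0^i(\lambda_i)=\frac{\lambda_i}{\lambda_i+\lambda_{-i}}\cdot\frac{\delta\mu\sigma}{(\delta+\mu)^2}-\frac{\kappa}{\delta+\mu}\,\lambda_i\bigl(\lambda_i+\lambda_{-i}+\delta\bigr)$. For any fixed $\lambda_{-i}>0$ this is strictly concave in $\lambda_i$ on $(0,\infty)$: the first term is concave because $\lambda_i\mapsto\lambda_i/(\lambda_i+\lambda_{-i})$ has second derivative $-2\lambda_{-i}/(\lambda_i+\lambda_{-i})^3<0$, and the second term is a downward quadratic in $\lambda_i$. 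Since $\pi_0^i\to 0$ as $\lambda_i\to 0^{+}$ and $\pi_0^i\to-\infty$ as $\lambda_i\to\infty$, strict concavity implies the best response is unique and, whenever an interior critical point exists, that point is the global maximizer with positive payoff. Consequently a symmetric profile satisfying the first-order condition is automatically a mutual best response, i.e., a Nash equilibrium, and conversely any symmetric Nash equilibrium must satisfy that first-order condition — so existence and uniqueness of the symmetric equilibrium reduce to existence and uniqueness of a positive symmetric root of the FOC.

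Finally, the FOC is $\frac{\lambda_{-i}}{(\lambda_i+\lambda_{-i})^2}\cdot\frac{\delta\mu\sigma}{(\delta+\mu)^2}=\frac{\kappa}{\delta+\mu}\bigl(2\lambda_i+\lambda_{-i}+\delta\bigr)$; imposing $\lm=\lb=\lambda$ and clearing denominators yields the quadratic $12\kappa(\delta+\mu)\lambda^2+4\kappa\delta(\delta+\mu)\lambda-\delta\mu\sigma=0$. Its constant term is strictly negative for $(\delta,\mu,\sigma,\kappa)\in(0,\infty)^4$, so the two roots have negative product: exactly one is positive, and the quadratic formula delivers precisely $\lpcs=\frac{-\delta+\sqrt{\delta^2+\frac{3\delta\mu\sigma}{\kappa(\delta+\mu)}}}{6}$ as in (\ref{leqpc}), which is positive since the radicand exceeds $\delta^2$. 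Combined with the concavity argument, this is the unique symmetric critical point and hence the unique symmetric Nash equilibrium; the remaining components of Definition \ref{defn1} are immediate — the post-trigger actions (HFB sends a marketable order, HFM a cancellation) are the obvious optimal responses, and the processor market clears at $p=\kappa(\lm+\lb)=2\kappa\lpcs$.

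I do not expect a serious obstacle: the bulk of the work is the concavity bookkeeping and solving one quadratic. The only point requiring care is the uniqueness claim at the quoting stage — ruling out all off-path quotes and mixtures over the market-maker/bandit roles — which I would handle exactly via the deviation inequality above as in \citet{Menkveld2017NeedLiquidity}; and, secondarily, being explicit that it is strict concavity (not merely the FOC) that upgrades the critical point to a genuine Nash equilibrium and that disqualifies the negative quadratic root.
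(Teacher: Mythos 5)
Your proof is correct and follows essentially the same route as the paper's: indifference pricing at the quoting stage gives $\ask^\star=\frac{\delta}{\delta+\mu}\sigma$ independently of the speed profile, and the speed stage reduces to the unique positive symmetric root of a quadratic, yielding exactly (\ref{leqpc}). Your first-order condition $\frac{\lambda_{-i}}{(\lambda_i+\lambda_{-i})^2}\frac{\delta\mu\sigma}{(\delta+\mu)^2}=\frac{\kappa}{\delta+\mu}\left(2\lambda_i+\lambda_{-i}+\delta\right)$ is the correct one (the paper's displayed expression (\ref{focpc}) contains a typographical error yet arrives at the same closed form), and your strict-concavity verification, which the paper leaves implicit, is a worthwhile addition that upgrades the symmetric critical point to a genuine mutual best response.
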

\begin{proof}
The beginning of the proof follows from the discussion above. 
What remains is to take the first-order condition of equation (\ref{hfteqpc}) with respect to $\lambda_{i}$ and invoking symmetry of speed intensity ($\lambda_{i}=\lambda_{-i}=\lpcs$)  to show that the resulting solution $\lpcs$ is unique and positive.
\begin{align}
    \text{F.O.C (\ref{hfteqpc}):}  & \text{ } \frac{\lambda_{i}\left(\delta\mu\sigma-(\lambda_{i}+\lambda_{-i}+\delta)\kappa(\lambda_{i}+\lambda_{-i})(\delta+\mu)\right)}{(\delta+\mu)^2(\lambda_{i}-\lambda_{-i})}=0, \label{focpc}
    \\ \iff & \lpcs = \frac{-\delta+\sqrt{\delta^2+\frac{3\delta\mu\sigma}{\kappa(\delta+\mu)}}}{6}.
\end{align}
where $\lpcs$ is the unique positive root, as the second root is negative and thus inadmissible. 
\end{proof}

\subsection{Decentralized exchange environment (on-demand speed investment) \label{sec:OD}}

Consider an environment in which HFTs invest in ``on-demand'' (\emph{OD}) processing speed $\cs_{i}(\lambda)$: HFTs submit a fee to rent processing speed at the same time that they submit an order to the exchange.
This environment captures the essence of a decentralized exchange, where a group of trade facilitators---``processors'' in the language of our model---provide order-processing services for a fee.
In a decentralize exchange, an HFT submits an order to a processor with the requisite fee such that the order is ferried to the matching engine. 
In our setup, $\hft_{i}$ will choose a Poisson arrival rate $\lambda_{i}$, where the intensity of processing speed reflects the mass of processors that the HFT pays to execute their order. 
The intuition is that the greater mass of processors that the HFT rents to submit their order, the greater the probability that their order is given priority over any competitors. 

With on-demand speed investment, an investment decision is made following the quoting game, which happens only following the arrival of news. 
Similar to the centralized exchange case, the news arrival triggers a rush by the HFB to snipe the stale quote.
To improve his chances of doing so, the HFB chooses the mass of processors $\lambda_B$ to maximize the expected payoff from attempting to snipe the stale quote,
\begin{align}
    \pi_{0}^\text{B}(\lb \mid \text{news arrival}) & = \pr(\qs)\times(\sigma-\ask)-\E[\text{processor rental time} \mid \lb]\cs_{B}(\lb),\nonumber
    \\ & = \frac{\lambda_B}{\lambda_B+\lambda_M} \left(\sigma-\ask\right) - \frac{1}{\lb+\lm}\lb \kappa \left(\lb+\lm\right). \label{hfbod}
\end{align}

Similarly, the arrival of news will trigger a reaction by the HFM to rush to cancel their stale quote.
The HFM rents a mass of processors $\lambda_M$ to minimize the cost of being sniped:
\begin{align}
    \pi_{0}^\text{M}(\lm \mid \text{news arrival}) & = \pr(\qs)\times(\ask-\sigma)-\E[\text{processor rental time} \mid \lm]\cs_{M}(\lm),\nonumber
    \\ & = \frac{\lambda_B}{\lambda_B+\lambda_M} \left(\ask-\sigma\right) - \frac{1}{\lb+\lm}\lambda_M \kappa \left(\lambda_B+\lambda_M\right). \label{hfmod}
\end{align}
Taking the first-order conditions of (\ref{hfbod}) and (\ref{hfmod}), we obtain,
\begin{align}
    \text{F.O.C (\ref{hfbod}):}  & \text{ }  \frac{\lambda_M}{(\lambda_B+\lambda_M)^2} \left(\sigma-\ask\right) - \kappa = 0, \label{focbod}
    \\  \text{F.O.C (\ref{hfmod}):}  & \text{ }  \frac{\lambda_B}{(\lambda_B+\lambda_M)^2} \left(\sigma-\ask\right) - \kappa = 0.\label{focmod}
\end{align}
We note that first-order conditions (\ref{focbod}) and (\ref{focmod}) are symmetric and solve for the fixed point, $\lm^\star=\lb^\star=\lods$,
\begin{equation}\label{eq_lods}
  \lambda^\star_\text{OD}=\frac{\sigma-\ask}{4\kappa}.
\end{equation}

Next, by backward induction, we analyze the quoting decision at $t=0$, and solve for $\ask^\star$, taking into account the optimal speed investment at the trigger time $\tau_\text{trigger}$. 
An $\hft_{i}$ selects their quoting strategy conditional on the anticipated outcome of the sniping/cancelling game played by the HFM and HFB.
Similarly to Section \ref{sec:PC}, the $\hft_{i}$ chooses the quote $\ask^\star$ such that the expected payoff to becoming either an HFM or an HFB are equal.
Evaluating the payoffs $\pi_{0}^\text{M}(\lambda_{i})$ and $\pi_{0}^\text{B}(\lambda_{i})$ at $\lods$ yields,
\begin{align}
    \pi_{0}^\text{M}(\lods) &=\frac{\delta}{\delta+\mu} \frac{\left(\ask-\sigma\right)}{2} +\frac{\mu}{\delta+\mu} \ask - \frac{\delta}{\delta+\mu}\frac{\sigma-\ask}{4\kappa}\kappa.
   \\ \pi_{0}^\text{B}(\lods) &= \frac{\delta}{\delta+\mu}\frac{\left(\sigma-\ask\right)}{2} - \frac{\delta}{\delta+\mu} \frac{\sigma-\ask}{4\kappa}\kappa.
\end{align}

Solving for $\ask^\star$ such that $\pi_{0}^\text{M}(\ask^\star; \lods)=\pi_{0}^\text{B}(\ask^\star; \lods)$ we obtain,
\begin{equation}\label{od_ask}
    \ask^\star=\frac{\delta}{\delta+\mu}\sigma.
\end{equation} 
Taken together, $\ask^\ast$ into $\lambda_{OD}^{\ast}$ yield the following proposition.
\begin{prop}[Symmetric On-Demand Equilibrium]\label{prop:od_eq}
Let $(\delta,\mu,\sigma,\kappa)\in(0,\infty)^4$.
There exists a unique Nash Equilibrium in the sense of Definition \ref{defn1} where $\ask^\star$ is as in Equation (\ref{od_ask}), and  $\lambda_{i}=\lambda_{-i}=\lods\in(0,\infty)$ solves the system of first-order conditions (\ref{hfbod})-(\ref{hfmod}).
Moreover, $\lods$ is given by:
\begin{equation}\label{leqod}
     \lods=\frac{\mu\sigma}{4\kappa(\delta+\mu)}.
\end{equation}
\end{prop}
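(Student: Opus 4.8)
The proposal is to complete the backward-induction argument begun above, filling three gaps: that the displayed first-order conditions identify genuine maxima and a \emph{unique} post-news subgame equilibrium; that the indifference condition yields the \emph{unique} symmetric quoting equilibrium; and the final substitution producing the closed form for $\lods$. For the subgame, I would fix an arbitrary prevailing ask $\ask<\sigma$. Holding $\lm$ fixed, the HFB payoff in (\ref{hfbod}) has second derivative $-\tfrac{2\lm}{(\lb+\lm)^3}\left(\sigma-\ask\right)<0$ on $(0,\infty)$, so it is strictly concave and its unique interior maximizer solves (\ref{focbod}); symmetrically, the HFM payoff in (\ref{hfmod}) has second derivative $\tfrac{2\lb}{(\lb+\lm)^3}\left(\ask-\sigma\right)<0$, hence is strictly concave in $\lm$ with unique interior maximizer solving (\ref{focmod}). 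Dividing (\ref{focbod}) by (\ref{focmod}) forces $\lb=\lm$, and back-substitution gives the unique interior fixed point $\tfrac{\sigma-\ask}{4\kappa}$ of (\ref{eq_lods}). I would then rule out corner configurations: a profile with exactly one zero component is never mutually best-responding, and $(\lb,\lm)=(0,0)$ is not an equilibrium because, with $\lm=0$, the HFB payoff $\lb\mapsto(\sigma-\ask)-\kappa\lb$ has no maximizer on $(0,\infty)$. Hence the subgame has a unique equilibrium, given by (\ref{eq_lods}).

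Next I would treat the $t=0$ quoting race. Substituting the subgame solution $\tfrac{\sigma-\ask}{4\kappa}$ into the role-contingent payoffs makes $\pi_{0}^\text{M}$ strictly increasing and $\pi_{0}^\text{B}$ strictly decreasing in $\ask$, so the (linear) indifference equation $\pi_{0}^\text{M}=\pi_{0}^\text{B}$ has the single root $\ask^\star=\tfrac{\delta}{\delta+\mu}\sigma$ of (\ref{od_ask}); this lies below $\sigma$ (consistent with the subgame hypothesis $\ask<\sigma$) and below $\eta$ (so liquidity investors execute). The Bertrand-type argument from Section \ref{sec:PC} then transfers: quoting $\ask^{\plus}>\ask^\star$ cedes the market-maker role and earns $\pi_{0}^\text{B}(\ask^\star)$, which equals the equilibrium payoff; quoting $\ask^{\minus}<\ask^\star$ secures the market-maker role but earns $\pi_{0}^\text{M}(\ask^{\minus})<\pi_{0}^\text{M}(\ask^\star)$ by monotonicity; and no common quote $\ask'\neq\ask^\star$ is an equilibrium, since the sign of $\pi_{0}^\text{M}(\ask')-\pi_{0}^\text{B}(\ask')$ hands one HFT a profitable infinitesimal undercut or overcut. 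This establishes (\ref{od_ask}) as the unique symmetric equilibrium quote and, together with the subgame solution, verifies every component of Definition \ref{defn1}.

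Finally, substituting $\ask^\star=\tfrac{\delta}{\delta+\mu}\sigma$ into (\ref{eq_lods}) and using $\sigma-\ask^\star=\tfrac{\mu}{\delta+\mu}\sigma$ gives $\lods=\tfrac{\sigma-\ask^\star}{4\kappa}=\tfrac{\mu\sigma}{4\kappa(\delta+\mu)}$, which lies in $(0,\infty)$ for every $(\delta,\mu,\sigma,\kappa)\in(0,\infty)^4$; this is (\ref{leqod}). I expect the main obstacle to be the uniqueness bookkeeping rather than any calculation --- cleanly excluding the subgame's corner equilibria and establishing the monotonicity/single-crossing of $\pi_{0}^\text{M}$ and $\pi_{0}^\text{B}$ that make $\ask^\star$ the unique quote --- since the closed form itself drops straight out of the first-order conditions already displayed.
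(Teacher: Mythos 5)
Your proposal is correct and follows the same backward-induction route as the paper: solve the post-news speed subgame from the first-order conditions (\ref{focbod})--(\ref{focmod}), impose the quoting indifference to get $\ask^\star=\frac{\delta}{\delta+\mu}\sigma$, and substitute into (\ref{eq_lods}). The paper's own proof simply points to the preceding discussion and performs the final substitution, so your additions --- the second-order/concavity checks, the exclusion of corner profiles in the subgame, and the monotonicity argument pinning down $\ask^\star$ as the unique quote --- supply verification steps the paper leaves implicit rather than a different argument.
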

\begin{proof}
The proof follows the discussion above; equation (\ref{leqod}) results from inputting $\ask^\star$ from equation (\ref{od_ask}) into the expression for $\lods$ in equation (\ref{eq_lods}).
\end{proof}

Figure \ref{fig:speed_intensity} illustrates the result in Propositions \ref{prop:pc_eq} and \ref{prop:od_eq}. First, HFTs acquire more processing power in decentralized relative to centralized exchanges. Second, HFT invest more (less) in speed on centralized (decentralized) markets as the news arrival rate increases. A higher news rate generates two effects. On the one hand, it increases the likelihood of a sniping opportunity, and therefore the value of trading speed. On the other hand, it translates to a wider bid-ask spread and lower sniping profits, conditional on news. On a centralized exchange, the first effect dominates and the value of speed increases in news frequency. In contrast, on a decentralized exchange, speed investment is incurred \emph{conditional} on news arrival, and consequently only the second effect persists.

\begin{figure}[H]
\caption{\label{fig:speed_intensity} \textbf{HFT Speed Investment}}
\begin{minipage}[t]{1\columnwidth}%
\footnotesize
This figure illustrates the equilibrium HFT investment in low-latency technology measured by the Poisson arrival rate $\lambda$, as a function of the news arrival rate $\delta$. Parameter values: $\mu=2$, $\kappa=0.25$, and $\sigma=1$.
\end{minipage}

\vspace{0.075in}

\begin{centering}
\includegraphics[width=\textwidth]{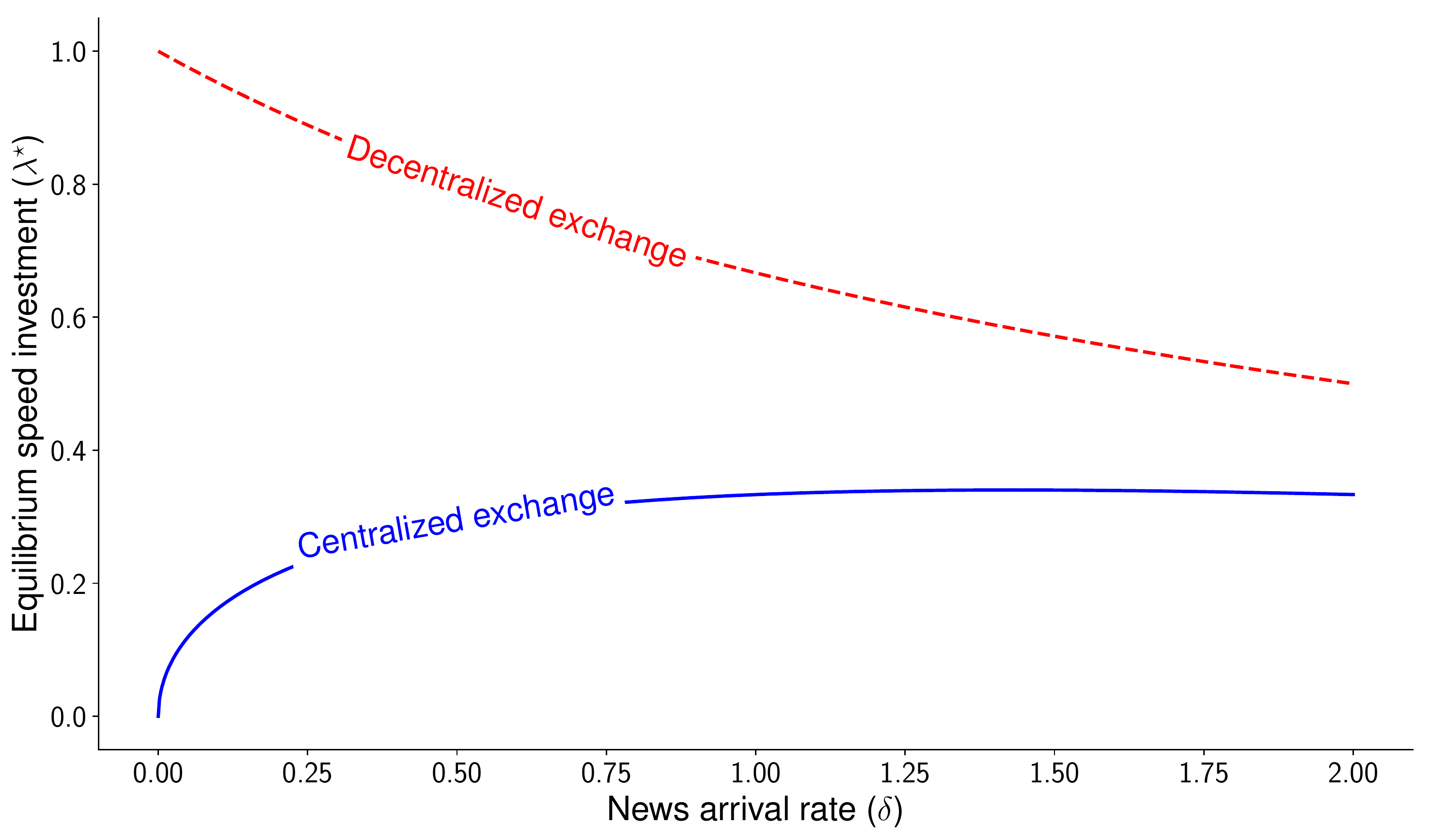}
\par\end{centering}

\end{figure}

\section{Impact on market quality \label{sec:Disc}}

In this section, we compare the equilibrium outcomes of the centralized and decentralized environments to examine how changing the timing of HFT investment in low-latency technology translates to changes in liquidity, price discovery, and the total amount of resources allocated to the speed race.

\paragraph{Liquidity and Price discovery.}
In Sections \ref{sec:PC} and \ref{sec:OD}, we obtain that both the centralized and decentralized markets lead HFTs to an identical equilibrium quoting strategy ($\ask^\star_{\pc}=\ask^\star_{\od}$) that is independent of their investment in speed intensity $\lambda$. 

\begin{cor}[Liquidity]\label{liq:pcod}
The bid-ask spread $\ask-\bid$ is equal to $2\times\frac{\delta\sigma}{\delta+\mu}$ in both the centralized and decentralized market setting.
\end{cor}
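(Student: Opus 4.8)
\emph{Proof proposal.} The plan is to read the spread off the two equilibrium ask quotes already computed, together with the buy–sell symmetry of the book noted at the start of Section~\ref{sec:eqm}. The first step is to observe that Propositions~\ref{prop:pc_eq} and~\ref{prop:od_eq} deliver \emph{the same} equilibrium ask, $\ask^\star=\frac{\delta}{\delta+\mu}\sigma$ --- Equation~(\ref{pc_ask}) in the centralized case and Equation~(\ref{od_ask}) in the decentralized case --- under the normalization $v=0$. Hence it is enough to determine the equilibrium bid $\bid^\star$ once, since the two environments will automatically agree on it by the identical reasoning.

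The second step is to invoke order-book symmetry. The quoting game at $t=0$ that pins down $\ask^\star$ is the indifference condition making an HFT indifferent between ending up as the HFM or the HFB; on the sell side this balances the liquidity-investor fill against the cost of being sniped after good news. The bid is governed by the mirror-image condition with good news replaced by bad news: after bad news the asset is worth $-\sigma$, a successful snipe of the bid earns the HFB $\bid^\star-(-\sigma)=\sigma+\bid^\star$ and costs the HFM the same amount, so the buy-side problem is the sell-side problem after relabeling $\ask\mapsto-\bid$. Re-running the derivations of~(\ref{pc_ask}) and~(\ref{od_ask}) on the buy side therefore gives $-\bid^\star=\frac{\delta}{\delta+\mu}\sigma$, i.e.\ $\bid^\star=-\ask^\star$, in both environments. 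Subtracting, $\ask^\star-\bid^\star=2\ask^\star=2\,\frac{\delta\sigma}{\delta+\mu}$, which is the claim.

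There is no genuine analytical obstacle here --- the corollary is essentially bookkeeping on top of the two propositions. The one point deserving a careful sentence is that the symmetry argument transfers verbatim to the on-demand environment: although the \emph{timing} of speed acquisition differs (speed is chosen after news in the decentralized case), the $t=0$ quoting subgame has the same zero-sum structure and the same symmetric buy/sell payoffs in both designs, so $\bid^\star=-\ask^\star$ holds regardless of when processors are rented. Restoring $v$, the bid and ask are $v\mp\frac{\delta\sigma}{\delta+\mu}$ and the spread is unchanged.
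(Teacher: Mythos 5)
Your proposal is correct and matches the paper's own (largely implicit) argument: the corollary follows immediately from the identical equilibrium ask quotes in Equations (\ref{pc_ask}) and (\ref{od_ask}), combined with the buy/sell symmetry of the order book noted at the start of Section \ref{sec:eqm}, which gives $\bid^\star=-\ask^\star$ under the normalization $v=0$. You simply make explicit the bookkeeping the paper treats as self-evident.
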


The result in Corollary \ref{liq:pcod} echoes the model of \citet{Budish2015}, who compare a continuous limit order market to frequent batch auctions. In both models, since the speed race is a zero-sum game between high-frequency traders, the bid-ask spread does not depend on absolute latency levels. Instead, the bid-ask spread depends solely on the magnitude of adverse selection costs.

Both the HFB and HFM spend resources to compete in the sniping/cancelling race that commences following the arrival of news.
In a centralized market, resource commitment to processing speed takes place before quotes are posted, leaving the possibility that HFTs will commit resources to a speed race that never begins (i.e., if a liquidity trader arrives before news). 
HFTs in a decentralized market, however, invest in low-latency technology only after a speed race is triggered by news arrival. 
Using the equilibrium values for speed intensity across the two markets, we evaluate the difference in speed intensity $\lpcs-\lods$ from equations (\ref{leqpc}) and (\ref{leqod}), as well as the difference in speed prices $p(\lpcs)-p(\lods)$ to arrive at the following proposition.

\begin{prop}
[Speed Intensity]\label{si:pcod}
The equilibrium speed intensity $\lambda^\star$ and price of speed $p(\lambda^{\star})$ are higher in the decentralized market than in the centralized market.
\end{prop}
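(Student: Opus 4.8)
The plan is to establish the two comparisons separately, starting from the closed-form expressions in Propositions \ref{prop:pc_eq} and \ref{prop:od_eq}. Recall that $\ask^\star=\frac{\delta\sigma}{\delta+\mu}$ in both environments, so $\sigma-\ask^\star=\frac{\mu\sigma}{\delta+\mu}$, and hence $\lods=\frac{\sigma-\ask^\star}{4\kappa}=\frac{\mu\sigma}{4\kappa(\delta+\mu)}$ from \eqref{eq_lods}--\eqref{leqod}, while $\lpcs=\frac{1}{6}\left(-\delta+\sqrt{\delta^2+\frac{3\delta\mu\sigma}{\kappa(\delta+\mu)}}\right)$ from \eqref{leqpc}. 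For the intensity comparison, I would show $\lods>\lpcs$, equivalently $6\lods+\delta>\sqrt{\delta^2+\frac{3\delta\mu\sigma}{\kappa(\delta+\mu)}}$; since both sides are positive I can square, substitute $\frac{\mu\sigma}{\kappa(\delta+\mu)}=4\lods$, and reduce the inequality to $36\lods^2+12\delta\lods>12\delta\lods$, i.e.\ $36\lods^2>0$, which holds because $\lods>0$ for strictly positive parameters. This is clean algebra once the substitution $\frac{\mu\sigma}{\kappa(\delta+\mu)}=4\lods$ is spotted.

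For the price comparison, recall from the model that the clearing price of speed when aggregate demand is $\lambda_i+\lambda_{-i}=2\lambda^\star$ equals $p(\lambda^\star)=\kappa\cdot 2\lambda^\star=2\kappa\lambda^\star$, so the price ranking is monotone in $\lambda^\star$; hence $p(\lods)>p(\lpcs)$ follows immediately from $\lods>\lpcs$. If instead the convention $p(\lambda^\star)=\kappa\lambda^\star$ is used, the conclusion is unchanged, since in either case $p$ is a strictly increasing linear function of the common equilibrium intensity. I would state this monotonicity explicitly and then invoke the intensity result.

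Concretely, the steps in order are: (i) substitute $\ask^\star$ into \eqref{eq_lods} to record $\lods=\frac{\mu\sigma}{4\kappa(\delta+\mu)}$ and note $\frac{\mu\sigma}{\kappa(\delta+\mu)}=4\lods$; (ii) write the target inequality $\lods>\lpcs$ as $6\lods+\delta>\sqrt{\delta^2+12\delta\lods}$, verify the left side is positive, square, and cancel to obtain the equivalent $36\lods^2>0$; (iii) conclude $\lods>\lpcs$ since $\lods>0$ whenever $(\delta,\mu,\sigma,\kappa)\in(0,\infty)^4$; (iv) observe that the clearing price $p(\lambda^\star)$ is a strictly increasing function of the symmetric intensity $\lambda^\star$, so $p(\lods)>p(\lpcs)$. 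I do not expect a genuine obstacle here: the only mild subtlety is making sure the squaring step is justified (both sides nonnegative) and that the algebraic cancellation is carried out correctly against the surd in $\lpcs$; everything else is bookkeeping with the explicit formulas already proved.
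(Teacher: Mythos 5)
Your proof is correct and follows essentially the same route as the paper's: both reduce $\lods>\lpcs$ to isolating the surd in $\lpcs$, squaring (both sides positive), and observing that the difference collapses to a manifestly positive term ($36\lods^2>0$ in your notation, $\bigl(\tfrac{3\mu\sigma}{2\kappa(\delta+\mu)}\bigr)^2>0$ in the paper's). Your explicit step (iv), noting that the clearing price $p=\kappa(\lambda_i+\lambda_{-i})=2\kappa\lambda^\star$ is strictly increasing in the symmetric intensity, is a small but welcome addition the paper leaves implicit.
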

\begin{proof}
See Appendix.
\end{proof}

Proposition \ref{si:pcod} aligns with intuition that HFTs will commit more resources to speed intensity---and pay a higher average per-unit price to do so---if they know that the speed race will occur. 
In a centralized market where HFTs must pre-commit to processing capacity knowing that said resources may remain idle, the competition for speed intensity is lower. 

Irrespective of the market environment, the outcome of the speed race is identical: stale quotes are removed from the market to make way for quotes that incorporate new information, thereby contributing to the price discovery process.
We can thus define a measure of price discovery that describes the expected race-time, and thus measure how quickly, on average, news is impounded into prices.
From the time of news arrival, the expected time until the \textit{first} order arrives is given by $\frac{1}{2\lambda^\star}$, which is the inverse of the sum of the equilibrium speed intensity investment by both HFTs.
Thus, taking the difference $\frac{1}{2\lpcs}-\frac{1}{2\lods}$,  we can infer from Proposition \ref{si:pcod} that the greater competition for speed intensity ($\lods>\lpcs$) in the decentralized market leads to faster price discovery relative to the centralized market.
We summarize our price discovery result in Corollary \ref{pd:pcod} below.

\begin{cor}
[Price Discovery]\label{pd:pcod}
The expected time required for news to impound into prices is shorter in a decentralized market than in a centralized market.
\end{cor}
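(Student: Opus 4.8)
The plan is to reduce the statement to Proposition \ref{si:pcod} via the strict monotonicity of $\lambda\mapsto 1/\lambda$ on $(0,\infty)$. First I would make precise the price-discovery metric already introduced in the discussion preceding the corollary: conditional on a news arrival, the HFB and HFM each dispatch an order to the matching engine (a marketable order to snipe, respectively a cancellation) as independent Poisson processes of intensity $\lambda^\star$, so the first order to reach the engine — the event that removes the stale quote and thereby starts the price update — arrives at the minimum of two independent exponential clocks with rate $\lambda^\star$ each, i.e.\ an exponential random variable with rate $2\lambda^\star$ and mean $\frac{1}{2\lambda^\star}$. This expression, and the structure of the race that produces it, is common to both environments: by the analysis of Sections \ref{sec:PC}–\ref{sec:OD} the quoting strategy is $\ask^\star=\frac{\delta}{\delta+\mu}\sigma$ in both cases (Corollary \ref{liq:pcod}), so only the equilibrium intensity $\lambda^\star$ differs across market designs.

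Second I would invoke Proposition \ref{si:pcod}, which gives $\lods>\lpcs$, together with Propositions \ref{prop:pc_eq}–\ref{prop:od_eq}, which ensure $\lpcs,\lods\in(0,\infty)$. Since $x\mapsto 1/x$ is strictly decreasing on $(0,\infty)$, it follows at once that
\[
\frac{1}{2\lods}<\frac{1}{2\lpcs},
\]
so the expected time from news arrival until the stale quote is removed and the price updated is strictly smaller in the decentralized market than in the centralized one, which is the claim. For concreteness one may instead substitute the closed forms \eqref{leqpc} and \eqref{leqod} to display the two expected times explicitly and check the inequality by hand, but this is unnecessary given Proposition \ref{si:pcod}.

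The only real subtlety — the ``main obstacle,'' such as it is — is conceptual rather than computational: one must argue that ``the first HFT order arrives'' is indeed the moment news is impounded into prices in both environments. This rests on the observation that whichever order arrives first, the stale quote leaves the book (consumed by the HFB or withdrawn by the HFM) and is subsequently replaced by a quote consistent with the new fundamental value, so the timing of price discovery is governed entirely by the aggregate arrival intensity $2\lambda^\star$; once that is granted, the comparison is the one-line reciprocal inequality above applied to Proposition \ref{si:pcod}.
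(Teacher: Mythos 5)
Your proposal is correct and follows essentially the same route as the paper: the text preceding the corollary identifies the expected time to price update with $\frac{1}{2\lambda^\star}$ (the mean of the minimum of the two HFTs' exponential arrival clocks) and then appeals to Proposition \ref{si:pcod}'s inequality $\lods>\lpcs$ exactly as you do. Your added remark that the identification of ``first order arrival'' with ``news impounded into prices'' is the only conceptual step is a fair and accurate gloss on what the paper leaves implicit.
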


Figure \ref{fig:price_discovery} illustrates the result in Corollary \ref{pd:pcod}: price discovery is faster on decentralized exchanges since HFTs engage in a more intense arms race (albeit for shorter intervals). One caveat remains that since decentralized exchange do not allow for co-location (since the infrastructure is cloud-based), the trader-to-exchange latency is higher than at centralized exchanges. Corollary \ref{pd:pcod} implies that this effect is at least partly compensated by higher investments in on-demand speed on decentralized exchanges.

\begin{figure}[t]
\caption{\label{fig:price_discovery} \textbf{Speed of Price Discovery}}
\begin{minipage}[t]{1\columnwidth}%
\footnotesize
This figure illustrates the expected time elapsed between a news event and the next trade or quote update, in both centralized and decentralized exchanges, as a function of the news arrival rate $\delta$. Parameter values: $\mu=2$, $\kappa=0.25$, and $\sigma=1$.
\end{minipage}

\vspace{0.075in}

\begin{centering}
\includegraphics[width=\textwidth]{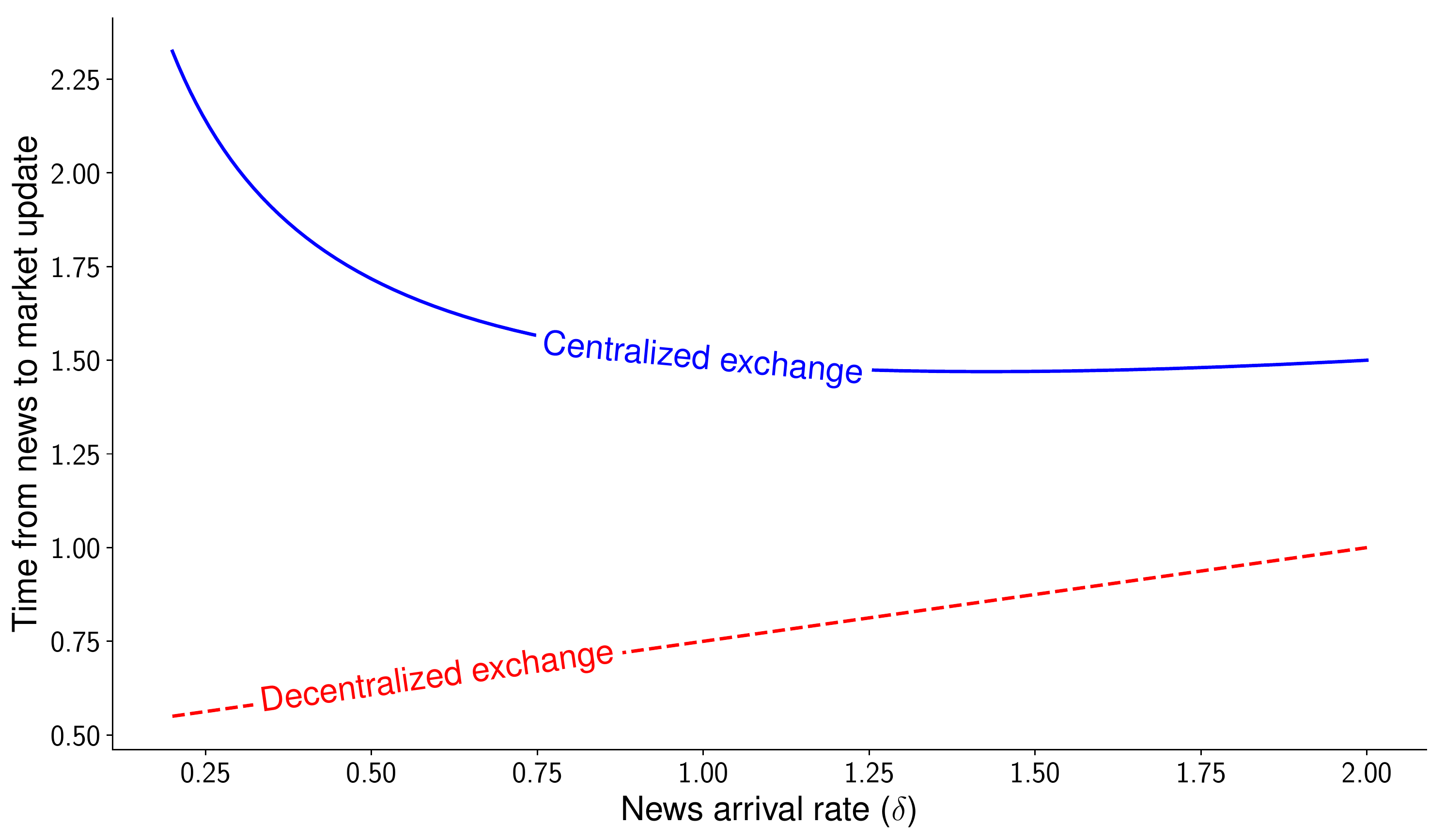}
\par\end{centering}

\end{figure}

\paragraph{Total Resource Usage and Rents from the Speed Race.}

Though a decentralized market leads the competition for processing speed to intensify at the point when processing resources are acquired, the length of time for which these resources need to be retained is a key advantage of a decentralized market: an HFT rents resources to process a single order, and upon order completion or cancellation, the resources are released. 

Using the equilibrium HFT investment in processing speed intensity $\lpcs$, we compute total resource usage in a centralized exchange environment. 
We denote total resource usage as $\Lambda_{PC}^\star$, which computes twice the per-HFT processor speed intensity multiplied by the average processor rental time $\E [\text{processor rental time} \mid \lpcs]$:
\begin{align}\label{eqtimepc}
    \Lambda_{PC}^\star & \equiv 2\lpcs \times  \E [\text{processor rental time} \mid \lpcs], \nonumber
    \\ & = 2\lpcs \times \left(\frac{1}{\delta+\mu}+\frac{\delta}{\delta+\mu}\frac{1}{2\lpcs}\right), \nonumber
    \\ & = \frac{\delta}{\delta+\mu}+\frac{\sqrt{\delta^2+\frac{3\delta\mu\sigma}{\kappa(\delta+\mu)}}-\delta}{3(\delta+\mu)}.
\end{align}
On-demand total resource usage, denoted by $\Lambda_\text{OD}^\ast$, is given by,
\begin{align}\label{eqtimeod}
    \Lambda_{\od}^\ast & = 2\lods \times \left(\frac{\delta}{\delta+\mu}\frac{1}{2\lods}\right) = \frac{\delta}{\delta +\mu}.
\end{align}

Because speed intensity and time-to-execution following news arrival are perfectly inversely-related, resource usage in the event of news is identical in both environments, regardless of the investment in speed intensity. On a centralized exchange, however, pre-commitment generates excessive resource consumption: both HFTs rent a mass $\lpcs$ of processors which are rented-but-idle from $t=0$ until the a trigger event occurs, a period with an expected length of $\frac{1}{\delta+\mu}$.

\begin{cor}[Resource Usage]\label{re:pcod}
The expected total resource allocation to processing speed is lower in the decentralized market than the centralized market, $\Lambda_{\pc}^{\star}>\Lambda_{\od}^{\star}$.
\end{cor}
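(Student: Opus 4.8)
The plan is to simply subtract the two closed-form expressions already derived and check the sign. By equations~(\ref{eqtimepc}) and~(\ref{eqtimeod}),
\begin{equation}
\Lambda_{\pc}^\star-\Lambda_{\od}^\star=\left(\frac{\delta}{\delta+\mu}+\frac{\sqrt{\delta^2+\frac{3\delta\mu\sigma}{\kappa(\delta+\mu)}}-\delta}{3(\delta+\mu)}\right)-\frac{\delta}{\delta+\mu}=\frac{\sqrt{\delta^2+\frac{3\delta\mu\sigma}{\kappa(\delta+\mu)}}-\delta}{3(\delta+\mu)}.
\end{equation}
So the claim reduces to showing this residual term is strictly positive, which is immediate: with $(\delta,\mu,\sigma,\kappa)\in(0,\infty)^4$ the quantity under the radical satisfies $\delta^2+\frac{3\delta\mu\sigma}{\kappa(\delta+\mu)}>\delta^2$, hence $\sqrt{\delta^2+\frac{3\delta\mu\sigma}{\kappa(\delta+\mu)}}>\delta$, and the denominator $3(\delta+\mu)$ is positive.

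I would also record the economic interpretation, since it makes the result transparent and ties back to the surrounding discussion. Using the closed form~(\ref{leqpc}), the numerator of the residual is exactly $6\lpcs$, so
\begin{equation}
\Lambda_{\pc}^\star-\Lambda_{\od}^\star=\frac{2\lpcs}{\delta+\mu}>0,
\end{equation}
where positivity follows from $\lpcs\in(0,\infty)$ established in Proposition~\ref{prop:pc_eq}. This term is precisely twice the per-HFT speed intensity $\lpcs$ multiplied by the expected time-to-trigger $\frac{1}{\delta+\mu}$, i.e.\ the ``rented-but-idle'' capacity the two HFTs hold from $t=0$ until a trigger event occurs in the pre-commitment market but not in the on-demand market. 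The fact that resource usage \emph{conditional on news} coincides across the two environments—already noted in the text, since $2\lambda^\star$ and $\frac{1}{2\lambda^\star}$ are reciprocals—means this idle term is the entire source of the gap.

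There is essentially no obstacle here: the hard work was done in Propositions~\ref{prop:pc_eq} and~\ref{prop:od_eq} and in the derivations of~(\ref{eqtimepc})–(\ref{eqtimeod}), and what remains is a one-line sign check. If anything, the only thing to be careful about is to state clearly that the inequality is \emph{strict} and holds on the entire parameter region $(0,\infty)^4$ (no boundary cases), which follows because $\frac{3\delta\mu\sigma}{\kappa(\delta+\mu)}$ is strictly positive there.
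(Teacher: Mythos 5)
Your proposal is correct and matches the paper's own reasoning: the paper gives no separate formal proof, but its discussion around equations (\ref{eqtimepc})--(\ref{eqtimeod}) rests on exactly your decomposition, namely that the conditional-on-news usage coincides across environments and the gap equals the rented-but-idle term $2\lpcs/(\delta+\mu)>0$. The direct sign check on the residual is the same one-line argument the paper implicitly relies on.
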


\begin{figure}[H]
\caption{\label{fig:resource} \textbf{Total Infrastructure Committed to Low-Latency Trading}}
\begin{minipage}[t]{1\columnwidth}%
\footnotesize
This figure illustrates the total usage of exchange infrastructure resources, in both centralized and decentralized markets, as a function of the news arrival rate $\delta$. Parameter values: $\mu=2$, $\kappa=0.25$, and $\sigma=1$.
\end{minipage}

\vspace{0.075in}

\begin{centering}
\includegraphics[width=\textwidth]{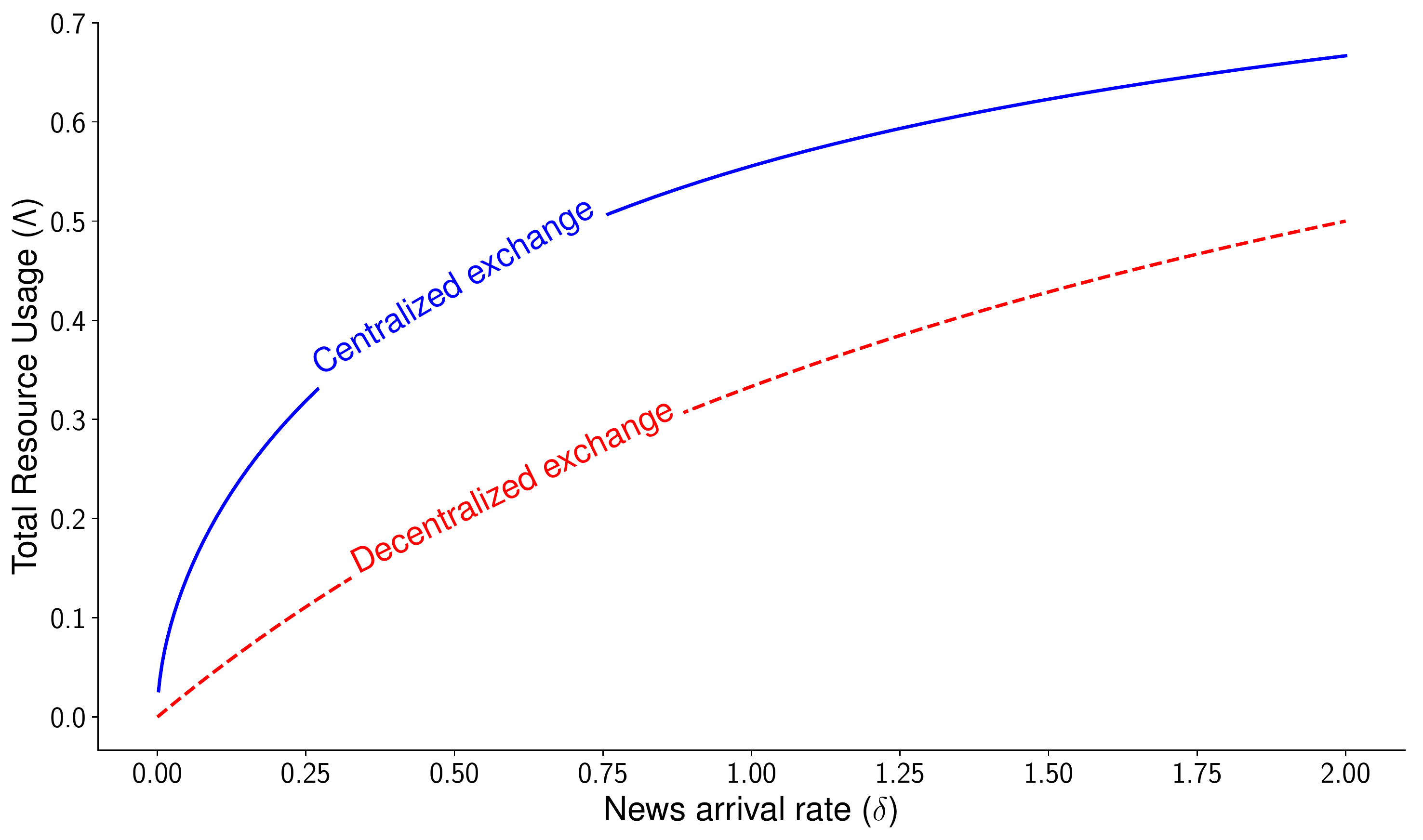}
\par\end{centering}

\end{figure}

Figure \ref{fig:resource} illustrates Corollary \ref{re:pcod}. As the news arrival rate increases, the expected resource usage increases in both centralized and decentralized markets, as HFT duels become more likely. 
On the one hand, HFTs acquire more CPU power upon observing news in decentralized markets; however, infrastructure is used more efficiently, with CPU power being rented only when a sniping opportunity emerges. 
We predict that the second effect dominates, yielding lower resource usage at decentralized exchanges.

Finally, we evaluate how on-demand speed in decentralized markets impacts the rents earned by high frequency traders from the speed race.
First, we evaluate the equilibrium HFT profit $\pi^{i}_{0}$ within the centralized market environment, from which we obtain:
\begin{align}\label{hftpipc1}
    \pi_{0}^\text{i}(\lpcs) & = \frac{\lpcs}{\lpcs+\lpcs} \frac{\delta\mu\sigma}{(\delta +\mu )^2}-\left(\frac{1}{\delta+\mu}+\frac{\delta}{(\delta+\mu)}\frac{1}{2\lpcs}\right)\times\lpcs \kappa\left(\lpcs +\lpcs\right),
    \\ & = \frac{\delta\kappa}{18(\delta+\mu)}\times\left(\delta+\frac{6\mu\sigma}{\kappa(\delta+\mu)}-\sqrt{\delta^2+\frac{3\mu\delta\sigma}{\kappa(\delta+\mu)}}\right).\label{hftpipc2}
\end{align}
Similarly, evaluating $\pi^{i}_{0}$ at $\lods$ and $\ask^\star$, we obtain the equilibrium HFT profit under the decentralized market environment:
\begin{align}\label{hftpiod1}
    \pi_{0}^\text{i}(\lods) & = \frac{\lods}{\lods+\lods}\frac{\delta\mu\sigma}{(\delta+\mu)^{2}}-\frac{\delta}{\delta+\mu}\frac{1}{2\lods}\kappa\lods(\lods+\lods),
    \\ & = \frac{\delta\mu\sigma}{4(\delta+\mu)^{2}}.\label{hftpiod2}
\end{align}
Comparing $\pi_{0}^\text{i}(\lpcs)$ and $\pi_{0}^\text{i}(\lods)$ from equations (\ref{hftpipc2}) and (\ref{hftpiod2}), respectively, we arrive at the following Corollary.

\begin{cor}[HFT Rents from the Speed Race]\label{pi:pcod}
The expected rent earned from the speed race by an HFT is greater in the centralized market than in the decentralized market.
\end{cor}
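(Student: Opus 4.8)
The plan is to prove the inequality $\pi_{0}^\text{i}(\lpcs) > \pi_{0}^\text{i}(\lods)$ directly from the closed-form profit expressions in equations (\ref{hftpipc2}) and (\ref{hftpiod2}), reducing the comparison to an elementary algebraic inequality. The key observation is that both expressions share a common strictly positive prefactor, so after dividing it out the claim collapses to a statement that can be verified by isolating and squaring a single square-root term.

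First I would introduce the shorthand $B \equiv \frac{\mu\sigma}{\kappa(\delta+\mu)}$. Under this substitution the radicand in (\ref{hftpipc2}) becomes $\delta^2 + 3\delta B$, the on-demand profit (\ref{hftpiod2}) can be rewritten as $\frac{\delta\kappa B}{4(\delta+\mu)}$, and the pre-commitment profit becomes $\frac{\delta\kappa}{18(\delta+\mu)}\bigl(\delta + 6B - \sqrt{\delta^2 + 3\delta B}\bigr)$. This makes the shared prefactor $\frac{\delta\kappa}{\delta+\mu} > 0$ explicit. Since $(\delta,\mu,\sigma,\kappa)\in(0,\infty)^4$, dividing the target inequality by this prefactor is legitimate and reduces the claim to
\[
\tfrac{1}{18}\bigl(\delta + 6B - \sqrt{\delta^2 + 3\delta B}\bigr) > \tfrac{B}{4}.
\]

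Next I would clear denominators (multiply by $36 > 0$) and rearrange to isolate the radical, obtaining the equivalent inequality $2\delta + 3B > 2\sqrt{\delta^2 + 3\delta B}$. Because both sides are strictly positive, this is equivalent to the squared inequality $(2\delta+3B)^2 > 4(\delta^2+3\delta B)$, which expands to $9B^2 > 0$, true since $B > 0$. Tracing the chain of equivalences backward then yields the corollary.

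The proof has no substantive obstacle: it terminates in the trivial inequality $9B^2 > 0$. The only step requiring any care is the squaring, where I must record that the left-hand side $2\delta + 3B$ is positive so that $X > Y \iff X^2 > Y^2$ holds as a genuine equivalence on the relevant nonnegative quantities; the strict positivity of all four primitives makes this immediate. Everything else is bookkeeping, so I anticipate the main effort is simply choosing the substitution $B$ that exposes the common prefactor and keeps the radical manipulation clean.
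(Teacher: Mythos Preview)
Your proof is correct. Both you and the paper reduce the claim to showing that the bracketed difference $2\delta+3B-2\sqrt{\delta^2+3\delta B}$ (in your variables) is positive, but you finish by squaring the radical inequality to reach the trivial $9B^2>0$, whereas the paper treats the analogous bracket as a function $f(\sigma)$, checks $f(0)=0$, and shows $f'(\sigma)>0$. Your route is slightly more elementary (pure algebra, no differentiation) and arguably cleaner, since the squaring step collapses immediately; the paper's monotonicity argument is marginally longer but has the mild advantage of making the boundary case $\sigma=0$ explicit. Substantively the two arguments are equivalent and equally rigorous.
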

\begin{proof}
See Appendix.
\end{proof}

On decentralized exchanges, HFTs earn lower rents from speed. 
Figure \ref{fig:hft_rents} presents this result graphically. 
At a first glance, the outcome is surprising since, from Proposition \ref{re:pcod}, HFTs employ fewer infrastructure resources. 
Moreover, equations (\ref{hftpipc2}) and (\ref{hftpiod2}) underscores that HFTs do not extract extra rents from liquidity traders, as the first terms of both equations are equal. 
These effects, however, are dominated by the ``surge-pricing'' feature of decentralized markets, as HFTs start acquiring speed following news arrival--during a ``micro-burst''--where competition for speed intensifies beyond that achieved in a centralized market  (Proposition \ref{si:pcod}). 
Therefore, the on-demand speed race of decentralized markets allows for a transfer of value from HFTs to suppliers of processing speed infrastructure. 

\begin{figure}[H]
\caption{\label{fig:hft_rents} \textbf{HFT Speed Race Rents}}
\begin{minipage}[t]{1\columnwidth}%
\footnotesize
This figure illustrates the equilibrium HFT profits in both centralized and decentralized exchanges, net of technology cost, as a function of the news arrival rate $\delta$. Parameter values: $\mu=2$, $\kappa=0.25$, and $\sigma=1$.
\end{minipage}

\vspace{0.075in}

\begin{centering}
\includegraphics[width=\textwidth]{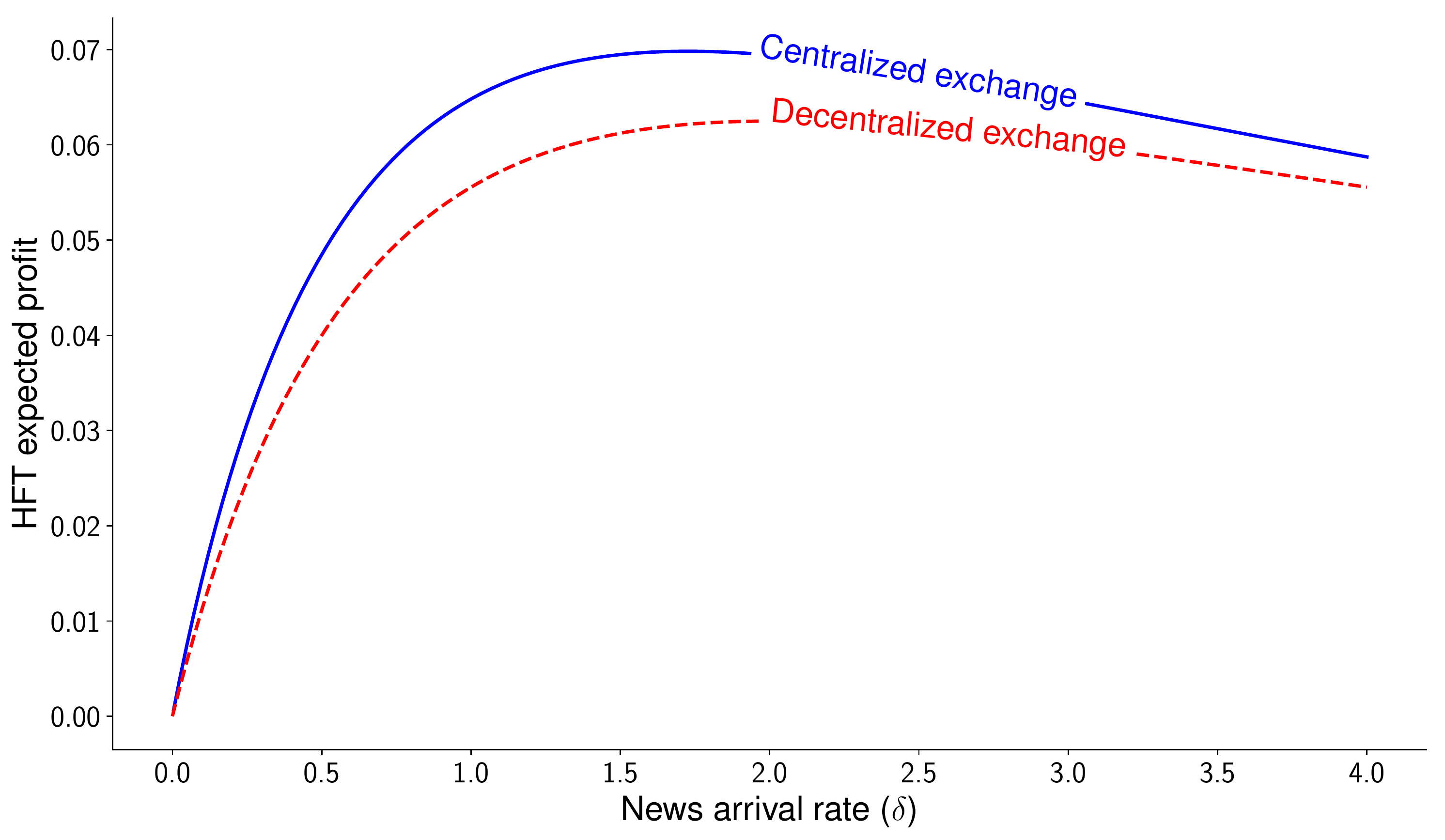}
\par\end{centering}

\end{figure}

\section{Concluding Remarks}\label{sec:conc}

In this paper, we argue that distributed exchanges can mitigate the negative consequences of the high-frequency trading arms race. On decentralized exchanges, HFTs acquire speed in real time, on an as-needed basis, whereas centralized exchanges provision excess capacity to these accommodate ``trading micro-bursts,'' following trading signals. By its nature, a distributed exchange eliminates the negative externality associated with maintaining idle capacity that occurs, for example, with co-location at centralized exchanges. We find that decentralized exchanges encourage short-lived, though intense, HFT races. 

On decentralized exchanges, intense HFT competition for speed during microbursts triggers a surge in the price of computer power. As a result, HFTs earn lower rents from low-latency trading and, at the same time, the overall resource consumption is lower. 

Decentralized exchanges can improve, or at least not harm, traditional measures of market quality. We find relatively quicker price discovery at decentralized exchanges at no cost to liquidity. Quicker price discovery obtains subject to the caveat that cloud-based exchanges exhibit slower stages of the trading process than centralized markets (i.e., from the trader computer to the exchange front end). Therefore, we caution that our model does not insist that decentralized exchanges are the fastest trading mechanism from order submission to settlement, nor do they necessarily yield quicker price discovery than centralized exchanges \textit{overall}. We contend, however, that these stages may see improvements in speed in the near future. This would highlight decentralized exchanges as a viable solution toward reducing the social cost associated with the low-latency arms race, without harming market quality as measured, for example, by liquidity and price discovery.\footnote{In February 2019, Binance (one of the largest cryptoasset exchange) launched a decentralized exchange allowing for one-second settlement times. See: \href{https://techcrunch.com/2019/02/20/binance-dex-decentralized-crypto-exchange/}{Binance releases a first version of its decentralized crypto exchange}.}

\bibliographystyle{jf}
\bibliography{references}


\newpage
\appendix

\numberwithin{equation}{section}
\numberwithin{prop}{section}
\numberwithin{lem}{section}
\numberwithin{defn}{section}
\numberwithin{cor}{section}

\let\normalsize\small

\section{Notation summary\label{sec:Variable-Definitions}}

\noindent
\begin{center}
\begin{tabular}{@{}ll@{}}
\toprule
\cmidrule{1-2}

\multicolumn{2}{c}{\textbf{Variable Subscripts}}\\
\cmidrule{1-2}
Subscript & Definition \\
\cmidrule{1-2}
$M$ & pertaining to HFT market-maker role\\
$B$ & pertaining to HFT ``bandit'' (i.e., sniper) role\\
$\pc$ & ``pre-commitment,'' or centralized market \\
$\od$ & ``on demand,'', or decentralized market \\
\cmidrule{1-2}

\multicolumn{2}{c}{\textbf{Exogenous Parameters}}\\
\cmidrule{1-2}
Parameters & Definition \\
\cmidrule{1-2}
$v$ & asset value at $t=0$, normalized to zero. \\
$\delta$ & Poisson arrival rate of news, i.e., common value innovations. \\
$\mu$ & Poisson arrival rate of liquidity traders (LI). \\
$\kappa$ & elasticity of CPU power supply function. \\
$\sigma$ & absolute size of common value innovations, if there is news. \\
\cmidrule{1-2}

\multicolumn{2}{c}{\textbf{Endogenous Quantities}}\\
\cmidrule{1-2}
Variable & Definition \\
\cmidrule{1-2}
ask & The price at which the market-maker is willing to sell one unit of the asset at $t=1$. \\
$\lambda_i$ & HFT speed, i.e., the Poisson intensity of the HFT market arrival process. \\
$\Lambda$ & Total expenditure with market CPU infrastructure. \\
$p\left(\cdot\right)$ & Market price of CPU infrastructure, $p=\kappa\sum_i \lambda_i$. \\
$\pi^0\left(\cdot\right)$ & HFT profit from trading, net of speed cost. \\

\bottomrule
\end{tabular}
\end{center}

\section{Proofs \label{sec:proofs}}

\noindent \textbf{\large Proposition \ref{si:pcod}}
\begin{proof}
To show that $\lods>\lpcs$, we compute the difference in speed intensity across the two market settings $\lods-\lpcs$,

\begin{equation}
    \lods - \lpcs = \frac{\delta+\frac{3\mu\sigma}{2\kappa(\delta+\mu)}-\sqrt{\delta^2+\frac{3\mu\sigma}{\kappa(\delta+\mu)}}}{6}
\end{equation}
It is enough to show that the numerator is positive, which we obtain by showing the following,
\begin{align}
    \delta+\frac{3\mu\sigma}{2\kappa(\delta+\mu)} & >\sqrt{\delta^2+\frac{3\mu\sigma}{\kappa(\delta+\mu)}}, 
    \\ \iff \left(\delta+\frac{3\mu\sigma}{2\kappa(\delta+\mu)}\right)^2 & >\left(\sqrt{\delta^2+\frac{3\mu\sigma}{\kappa(\delta+\mu)}}\right)^2 \nonumber
    \\ \iff \delta^2+\frac{3\mu\sigma}{\kappa(\delta+\mu)} + \frac{9\mu\sigma}{4\kappa(\delta+\mu)} &> \delta^2+\frac{3\mu\sigma}{\kappa(\delta+\mu)} \nonumber
    \\ \iff \frac{9\mu\sigma}{4\kappa(\delta+\mu)} &>0\nonumber
\end{align}
which concludes the proof. \end{proof}

\noindent \textbf{\large Corollary \ref{pi:pcod}}
\begin{proof}
We compute the difference between the HFT expected profit in the two market settings from equations \eqref{hftpipc2} and \eqref{hftpiod2}, that is
\begin{equation}
    \pi_{0}^\text{i}(\lpcs)-\pi_{0}^\text{i}(\lods)=\frac{\delta}{36\left(\delta+\mu\right)^2} \left[2 (\delta +\mu ) \left(\delta  \kappa -\sqrt{\delta  \kappa  \left(\delta  \kappa
   +\frac{3 \mu  \sigma }{\delta +\mu }\right)}\right)+3 \mu  \sigma\right].
\end{equation}
It is enough to show that 
\begin{equation}
    f\left(\sigma\right)\equiv 2 (\delta +\mu ) \left(\delta  \kappa -\sqrt{\delta  \kappa  \left(\delta  \kappa
   +\frac{3 \mu  \sigma }{\delta +\mu }\right)}\right)+3 \mu  \sigma >0.
\end{equation}
We note that $f\left(0\right)=0$ and further that $f$ increases in $\sigma$ since:
\begin{equation}
    \frac{\partial f}{\partial \sigma}=3 \mu  \left(1-\underbrace{\frac{\delta  \kappa }{\sqrt{\delta  \kappa  \left(\delta  \kappa
   +\frac{3 \mu  \sigma }{\delta +\mu }\right)}}}_{<1}\right)>0,
\end{equation}
which concludes the proof. \end{proof}

\end{document}